\newtheorem{thm}{Theorem}[section]
\newtheorem{prop}[thm]{Proposition}
\theoremstyle{definition}
\newtheorem{defn}[thm]{Definition}
\begin{document}


\title{Singularities and n-dimensional black holes in torsion theories\\[0.4cm]}
\thispagestyle{empty}

\author{
J. A. R. Cembranos,\footnote{Electronic address: cembra@fis.ucm.es}
J. Gigante Valcarcel, \footnote{Electronic address: jorgegigante@ucm.es}
F. J. Maldonado Torralba, \footnote{Electronic address: fmaldo01@ucm.es}
\vspace*{0.3cm}}

\affiliation{Departamento de F\'{\i}sica Te\'orica I, Universidad Complutense de Madrid, E-28040 Madrid, Spain.}

\begin{abstract}
In this work we have studied the singular behaviour of gravitational theories with non symmetric connections.
For this purpose we introduce a new criteria for the appearance of singularities based on the existence of black/white hole regions of arbitrary codimension defined inside a spacetime of arbitrary dimension. 
We discuss this prescription by increasing the complexity of the particular torsion theory under study. In this sense, we start with Teleparallel Gravity,
then we analyse Einstein-Cartan theory, and finally dynamical torsion models. 
\end{abstract}
\maketitle

\section{Introduction}
\setcounter{page}{1}
In a physical theory, a singularity is commonly known as a ``place'' where some of the variables used in the description of the system diverge. For example, we find this  in the singularity in $r=0$ of the Coulombian potential $V=K\frac{q}{r}$. This kind of  behaviour appears mainly because the theory is not valid in the considered region or we have assumed a simplification. In the previous example the singularity arises due to the fact that we are considering the charged particle as a point and neglecting the quantum effects.

In General Relativity (GR), one might expect to observe singularities when the components of the tensors that describe the curvature of the spacetime diverge. This means that the curvature is higher than $\frac{1}{l_{p}^{2}}$, where $l_{p}$ is the Planck length, so we need to have into account the quantum effects, which are not considered in this theory. However, there are situations where this  behaviour is given as a result of the chosen coordinates. This is the case of the ``singularity" in $r=2M$ in the Schwarzschild metric. For this reason, another criteria, proposed by Penrose~\cite{Penrose}, is used to define a singularity: geodesic incompleteness. The physical interpretation of this condition is the existence of free falling observers that appear or disappear out of nothing. This is ``strange" enough to consider it a sufficient condition to assure that there is a singularity.

Already in the first solutions of Einstein equations there are ``places"  where the components of the curvature tensors diverge, like in $r=0$ in the Schwarzschild metric and $t=0$ in the Friedmann-Lema\^itre-Robertson-Walker (FLWR) metric, but it was thought that this was a consequence of the excessive symmetry of the solutions, as it occurs in many situations in classical mechanics or electromagnetism.
The first attempt of proving a singularity theorem was made by Raychaudhuri~\cite{Raychaudhuri} in 1955, in an article where he introduced his famous equation, which is essential in the later development of singularity theorems.
Ten years later, Penrose formulated the first singularity theorem that does not assume any symmetry~\cite{Penrose} (for a recent review see~\cite{SenPen}). It is also the first to use geodesic incompleteness in the definition of a singularity. This theorem showed that the singularity in $r=0$ of the Schwarzschild metric is also present under non symmetrical gravitational collapses.

The same happens with the singularity in $t=0$ of the FLRW metric, but this time is a consequence of a theorem stated by Hawking a year later~\cite{Hawking}, which predicts that, under three physically realistic conditions, all past directed timelike geodesics have finite length, therefore every particle of the Universe (hence the Universe itself) had a beginning. The mentioned conditions are that the action of the Ricci tensor over a timelike vector is greater or equal than cero, which it is interpreted as the attractive nature of gravity, that the Universe is globally hyperbolic and there is an hypersurface with positive initial expansion. Although we have said that the conditions are physically realistic, since it was measured the accelerated expansion of the Universe~\cite{Riess}, the convergence condition fails.

In general, all singularity theorems follow the same pattern, made explicit by Senovilla in~\cite{Senovilla}:
\begin{thm} 
(Pattern singularity ``theorem"). If the spacetime satisfies:\\
1) A condition on the curvature. \\
2) A causality condition. \\
3) An appropriate initial and/or boundary condition. \\
Then there are null or timelike inextensible incomplete geodesics. 
\end{thm}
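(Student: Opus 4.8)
The plan is to realize the three abstract hypotheses through the two pillars common to every concrete singularity theorem: the focusing behaviour encoded in the Raychaudhuri equation, and a variational (length-maximizing) argument valid under the assumed causal structure. The proof would proceed by contradiction: assume that all null or timelike inextensible geodesics are complete, and derive a violation of the maximality properties guaranteed by the causality condition.

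First I would introduce a congruence of null or timelike geodesics emanating orthogonally from the object specified by the boundary/initial condition 3) --- a hypersurface with positive expansion in the cosmological case, or a closed trapped surface in the gravitational-collapse case. Writing the Raychaudhuri equation for the expansion scalar $\theta$,
\begin{equation}
\frac{d\theta}{d\lambda} = -\frac{\theta^{2}}{n-1} - \sigma_{\mu\nu}\sigma^{\mu\nu} + \omega_{\mu\nu}\omega^{\mu\nu} - R_{\mu\nu}u^{\mu}u^{\nu},
\end{equation}
I would use the curvature condition 1), namely $R_{\mu\nu}u^{\mu}u^{\nu}\geq 0$ for the relevant tangent $u^{\mu}$, together with the vanishing of the vorticity $\omega_{\mu\nu}$ for a hypersurface-orthogonal congruence, to bound the right-hand side from above. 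Integrating the resulting Riccati-type inequality shows that, once $\theta$ has become negative --- which the initial condition 3) supplies --- the expansion diverges, $\theta\to-\infty$, within a finite affine distance, so that a focal or conjugate point forms along every geodesic of the congruence.

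Next I would invoke the causality condition 2), typically global hyperbolicity, to guarantee through a limit-curve / Arzel\`a--Ascoli compactness argument that between the initial set and any point in its causal future there exists a causal geodesic of maximal length. The standard second-variation fact is that a causal geodesic ceases to maximize length beyond its first conjugate or focal point. Combining the two statements yields the contradiction: completeness would allow every geodesic of the congruence to be extended past the focal point whose existence was just established, yet such an extended geodesic can no longer be maximizing, contradicting the maximality secured by global hyperbolicity. Hence at least one causal geodesic must be inextensible and incomplete, which is the desired conclusion.

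The main obstacle I anticipate is not the focusing estimate, which is a direct integration of the inequality, but the maximization step: one must show rigorously that the causal structure forces the existence of a maximizing geodesic and that focal points genuinely obstruct maximality. This relies on the Lorentzian distance function, its semicontinuity under global hyperbolicity, and the Morse-index analysis of causal geodesics --- the part most sensitive to the precise form of the causality and boundary hypotheses, and the one that will have to be re-examined most carefully once the non-symmetric connection and its autoparallels replace the Levi-Civita geodesics in the torsion theories studied below.
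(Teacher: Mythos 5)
You should note at the outset that the paper offers no proof of this statement: it is Senovilla's \emph{pattern} ``theorem'', a schema with placeholder hypotheses (the scare quotes in the original are deliberate), and nothing can be proven until conditions 1)--3) are instantiated. What you have written is therefore a proof template, which is the appropriate response, and as a template it is the correct Hawking--Penrose strategy. The comparison with the paper is still instructive. You establish focusing via the Raychaudhuri equation, integrating the Riccati inequality for $\theta$ under the Ricci curvature condition and vanishing vorticity; the paper explicitly labels this the route of the physics literature, but when it actually needs focalisation results (Proposition II.3 and the Senovilla--Galloway theorems it builds on) it uses the variational/Hessian (index form) approach instead, in which absence of focal points is equivalent to positive semi-definiteness of $I_{\gamma}(V,V)$. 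The two are equivalent for hypersurface-orthogonal congruences, but the Hessian formulation extends cleanly to trapped submanifolds of arbitrary codimension $m$, where the curvature condition weakens to $R_{\mu\nu\rho\sigma}N^{\mu}N^{\rho}P^{\nu\sigma}\geq 0$ with $P^{\nu\sigma}$ the projector onto $\Sigma$, rather than the full contraction $R_{\mu\nu}u^{\mu}u^{\nu}\geq 0$ your argument uses; that weakening is the entire point of the generalisation the paper needs. Two smaller points: for null congruences the coefficient in your Raychaudhuri equation should be $\theta^{2}/(n-2)$, not $\theta^{2}/(n-1)$; and in the Penrose-type (null) case the contradiction is not reached through length maximization but through a topological argument --- the future horismos $E^{+}(\Sigma)$ becomes a compact achronal boundary, which is incompatible with a non-compact Cauchy hypersurface --- so your maximization step covers the Hawking-type timelike theorems but not the null ones the paper later leans on for its black-hole-region proposition.
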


Let us stop for a moment and analyse the configuration of the theorems. When the singularity theorems are derived, no assumptions are made on the underlying physical theory, that is, the one that links the matter and energy content with the structure of the spacetime. This means that they are valid, not only for GR, but for all the modifications that change the Einstein-Hilbert action. It is worth mentioning that the first condition can be reformulated using the field equations of the theory, obtaining what is known as the \emph{energy conditions}. These conditions are dependent of the considered theory, therefore they will differ from one to another, e.g., in GR they are formulated in terms of the energy-stress tensor only, while in f(R) theories there are some extra terms related to the curvature~\cite{ACdD}. 
Since we are working in a Lorentzian manifold, we have to endow it with an affine structure, which is implicitly assumed to be the Levi-Civita one, as it is postulated in GR, given by the Christoffel symbols~\cite{Wald},
\begin{equation}
\label{eq1}
\mathring{\Gamma}_{\mu\nu}^{\rho}=\frac{1}{2}g^{\rho\sigma}\left(\partial_{\mu}g_{\nu\sigma}+\partial_{\nu}g_{\mu\sigma}-\partial_{\sigma}g_{\mu\nu}\right).
\end{equation} 
This is the unique connection that is covariantly conserved~\cite{JimKo}, $\mathring{\nabla}_{\rho}g_{\mu\nu}=0$, and symmetric, $\mathring{\Gamma}_{\mu\nu}^{\rho}=\mathring{\Gamma}_{\nu\mu}^{\rho}$. \\
A metric has $D(D+1)/2$ components in a D-dimensional Lorentzian manifold, as it is a symmetric 2-covariant tensor. On the other hand, a general connection has $D^{3}$ components which are, in principle, completely independent degrees of freedom. Out of the $D^{3}$ components, $D^{2}(D-1)/2$ reside in the antisymmetric part
\begin{equation}
T_{\,\mu\nu}^{\rho}\equiv\Gamma_{\mu\nu}^{\rho}-\Gamma_{\nu\mu}^{\rho},
\end{equation} 
which is known as \emph{torsion}. The rest of degrees of freedom, $D^{2}(D+1)/2$, are encoded in the \emph{non-metricity} tensor
\begin{equation}
M_{\rho\mu\nu}=\nabla_{\rho}g_{\mu\nu}.
\end{equation} 
One might wonder if it is possible to modify the gravitational theory by setting these tensors to be different from zero, i.e. postulating a connection that it is not Levi-Civita. Certainly it is, although we have to take into account some considerations: 
\begin{enumerate}
\item Every connection assigns to a curve $\gamma$ a different acceleration, given by
\begin{equation}
a^{\nu}=v^{\mu}\nabla_{\mu}v^{\nu},
\end{equation} 
where $v^{\nu}=\frac{dx^{\mu}}{ds}$ is the four-velocity of the curve $\gamma$, parametrised by its proper time as $\gamma\left(s\right)=x^{\mu}\left(s\right)$.
If acceleration is to keep a meaning~\cite{TG}, it is necessary that the same metric is considered all along the curve. In other words, the connection must parallel-transport the metric, that is
\begin{equation}
v^{\rho}\nabla_{\rho}g_{\mu\nu}=0
\end{equation}
for every vector field $v^{\rho}$, which is equivalent to the metricity condition ($M_{\rho\mu\nu}=0$). This is why we will only consider connections that fulfill this condition from now on, although there has been work done in modified theories that set the non-metricity tensor different from zero, like in~\cite{JimKo} (for a review of these theories see~\cite{Nonmet}).
\item A different connection does not necessarily leads to a theory with a different phenomenology, since the action may be invariant or differ only by a divergence term under this change, therefore leaving the field equations unchanged. This is the case of a spacetime with linear vector distortion~\cite{JimKo} or teleparallel Gravity (TEGR)~\cite{TG}. 
\end{enumerate}
The latter case deserves some attention, as it is one of the simplest cases of this kind of theories, while at the same time, it is a good example to first apply the methods that we will use in more complicated ones. But first, let us review the singularity theorems in GR.

\section{Singularity theorems in General Relativity}
It seems logical that since we are generalizing the singularity theorems of GR, we introduce in this section the most general ones. This is the case of two recent theorems due to Senovilla and Galloway~\cite{SG}, that predict the occurrence of singularities, i.e. incomplete geodesics, based on the existence of trapped submanifolds of arbitrary co-dimension. The main key of the demonstration is, like in almost every singularity theorem, finding the conditions for the appearance of \emph{focal} and/or \emph{conjugate} points.

Let us consider a family of geodesics $\gamma_{s}\left(t\right)$, where $T^{\mu}=\left(\frac{\partial}{\partial t}\right)^{\mu}$ is the tangent vector to the family and $X^{\mu}=\left(\frac{\partial}{\partial s}\right)^{\mu}$ is the orthogonal deviation vector (that represents the displacement towards an infinitesimally near geodesic). These vectors follow the orthogonal deviation equation
\begin{equation}
T^{\mu}\nabla_{\mu}\left(T^{\nu}\nabla_{\nu}X^{\rho}\right)=-R_{\mu\nu\lambda}^{\,\,\,\,\,\,\rho}X^{\nu}T^{\mu}T^{\lambda}.
\end{equation}
A solution $X^{\mu}$ of this equation is called a \emph{Jacobi field on} $\gamma$. With this established we can see what we understand by conjugate and focal points:
\begin{defn}
Let $\gamma$ be a geodesic emanating from $p$ (orthogonal to a spacelike submanifold $\Sigma$). Then a point $q$ is conjugate (focal) along $\gamma$ to the the point $p$ (of the spacelike hypersurface $\Sigma$) if there exists a non-zero Jacobi field on $\gamma$ that vanishes at $p$ and $q$ (does not vanish at $\Sigma$ and vanishes at $q$).
\end{defn}
The problem of whether this kind of points will appear or not can be addressed in two different ways. In the physics orientated literature~\cite{HE, Wald} it is studied by means of the Raychaudhuri equation, which gives us the evolution of the expansion in a congruence of curves (not necessarily geodesics). To obtain this equation, we decompose the covariant derivative of the tangent vector of a congruence of curves, $B_{\mu\nu}=\mathring{\nabla}_{\nu}v_{\mu}$, into its antisymmetric $\omega_{\mu\nu}$, known as \emph{vorticity}, traceless symmetric $\sigma_{\mu\nu}$, usually referred as \emph{shear}, and trace part $\theta$, also known as \emph{expansion}, such as
\begin{equation}
B_{\mu\nu}=\frac{1}{3}\theta h_{\mu\nu}+\sigma_{\mu\nu}+\omega_{\mu\nu},
\end{equation}
where $ h_{\mu\nu}$ is the projection of the metric into the spacial subspace orthogonal to the tangent vector. Then, it can be seen that~\cite{HE}
\begin{eqnarray}
v^{\rho}\mathring{\nabla}_{\rho}\theta&=&\frac{d\theta}{ds}=-\frac{1}{3}\theta^{2}-\sigma^{\mu\rho}\sigma_{\mu\rho}
\nonumber
\\
&+&\omega^{\mu\rho}\omega_{\mu\rho}-\mathring{R}_{\rho\varphi}v^{\rho}v^{\varphi}+\mathring{\nabla}_{\mu}\left(v^{\nu}\mathring{\nabla}_{\nu}v^{\mu}\right),
\end{eqnarray}
which is the so called Raychaudhuri equation. With that, we can predict under what circumstances the expansion goes to minus infinity, which is the equivalent of having a conjugate/focal point~\cite{Wald}.   \\
On the other hand, in the mathematical literature~\cite{Oneill} this is solved in the context of variational calculus, by using the so-called \emph{Hessian} form. It is based on the idea that the set of all piecewise smooth curve segments $\gamma:\,\left[0,\, b\right]\longrightarrow M$ from a submanifold $P$ (that clearly includes the case $P=p$) to a point $q$, 
$\Omega\left(P,\, q\right)$, can be treated as a manifold. 

There is an explicit expression for this form, but before we write it we have to familiarize ourselves with the notation.
Let $\Sigma$ be a spacelike submanifold of arbitrary co-dimension, then we can define~\cite{SG}:
\begin{itemize}
\item $n_{\mu}$: future directed vector, perpendicular to the spacelike submanifold $\Sigma$.
\item $\overrightarrow{e}_{A}$: vector fields tangent to $\Sigma$.
\item $\gamma$: geodesic curve tangent to $n^{\mu}$ at $\Sigma$.
\item $u$: affine parameter along $\gamma$, taking $u=0$ at $\Sigma$.
\item $N^{\mu}$: geodesic vector field tangent to $\gamma$, having $\left.N_{\mu}\right|_{u=0}=n_{\mu}$.
\item $\overrightarrow{E}_{A}$: vector fields that are the parallel transport of $\overrightarrow{e}_{A}$ along $\gamma$ (using the Levi-Civita connection), satisfying that $\left.\overrightarrow{E}_{A}\right|_{u=0}=\overrightarrow{e}_{A}$.
\item $P^{\mu\nu}\equiv\gamma^{AB}E_{A}^{\mu}E_{B}^{\nu}$, where $\gamma^{AB}$ is the inverse of the first fundamental form of $\Sigma$ in the spacetime, $\gamma_{AB}=g_{\mu\nu}e_{A}^{\mu}e_{B}^{\nu}$. In $u=0$, $P^{\mu\nu}$ is just the proyector to $\Sigma$.
\item Expansion of the submanifold $\Sigma$ along $\overrightarrow{n}$: $\theta\left(\overrightarrow{n}\right)\equiv n_{\mu}H^{\mu}=\gamma^{AB}K_{AB}\left(\overrightarrow{n}\right)$, where $\overrightarrow{H}$ is the \emph{mean curvature vector} of the submanifold $\Sigma$, and $K_{AB}\left(\overrightarrow{n}\right)$ is the contraction of the \emph{shape tensor} $\overrightarrow{K}_{AB}$ with the one-form $n_{\mu}$~\cite{Oneill}. If $\theta\left(\overrightarrow{n}\right)<0$ for all posible normal vectors, $\Sigma$ is said to be a future trapped submanifold.
\end{itemize}
Now we can express the Hessian of two vector fields $V,\, W\in T_{\gamma}\left(\Omega\left(\Sigma,\, q\right)\right)$ as
\begin{equation}
\begin{array}{c}
I_{\gamma}\left(V,\, W\right)=\int_{0}^{b}\left[\left(N^{\mu}\nabla_{\mu}V^{\nu}\right)\left(N^{\rho}\nabla_{\rho}W_{\nu}\right)-\right.\\
\,\\
\left.-N_{\mu}R_{\,\,\nu\rho\sigma}^{\mu}V^{\nu}N^{\rho}W^{\sigma}\right]du+K_{AB}\left(\overrightarrow{n}\right)v^{A}w^{B},
\end{array}
\end{equation}
where $\overrightarrow{v}=\overrightarrow{V}\left(u=0\right)$, $v_{A}=v_{\mu}e_{A}^{\mu}$ is the part of $\overrightarrow{v}$  tangent to $\Sigma$ , and the same for $\overrightarrow{W}$ and $\overrightarrow{w}$~\cite{SG}.

The reader might be wondering what is the connection between the Hessian and the conjugate and focal points. The next theorem clears all doubts~\cite{Kriele}.
\begin{thm}
Let $\Sigma$ be a spacelike submanifold and $\gamma$ a causal curve orthogonal to $\Sigma$, then the submanifold $\Sigma$ does not have focal points along $\gamma$ if and only if the Hessian is semi-positive definite, having $I_{\gamma}\left(V,\, V\right)=0$ only if $\overrightarrow{V}$ is proportional to $\overrightarrow{N}$ on $\gamma$.
\end{thm}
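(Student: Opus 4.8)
The plan is to read $I_{\gamma}$ as the Hessian (second variation) of the energy functional on the path manifold $\Omega\left(\Sigma,\, q\right)$, so that the assertion becomes the standard Morse-theoretic link between the index form and focal points, with the focal boundary data encoded in the shape-tensor term $K_{AB}\left(\overrightarrow{n}\right)v^{A}w^{B}$. The degeneracies of $I_{\gamma}$ are controlled by Jacobi fields: a field $\overrightarrow{V}$ lies in the kernel of $I_{\gamma}$ precisely when it solves the orthogonal deviation equation (is a Jacobi field on $\gamma$) and satisfies the endpoint condition $\overrightarrow{V}\left(b\right)=0$ at $q$ together with the $\Sigma$-focal condition at $u=0$ relating $N^{\mu}\nabla_{\mu}V$ to $K_{AB}\left(\overrightarrow{n}\right)$. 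I would prove the two implications separately, arguing by contraposition where convenient.

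For the reverse direction I would assume $\Sigma$ has a focal point at $\gamma\left(u_{0}\right)$ with $u_{0}\leq b$ and let $\overrightarrow{J}$ be the associated nonzero $\Sigma$-Jacobi field with $\overrightarrow{J}\left(u_{0}\right)=0$. If $u_{0}=b$, then $\overrightarrow{J}$ is itself a nonzero element of the kernel of $I_{\gamma}$ that is not proportional to $\overrightarrow{N}$ (it is orthogonal to $\overrightarrow{N}$ and vanishes at $q$), which already violates the nondegeneracy clause. If $u_{0}<b$, I would form the broken field equal to $\overrightarrow{J}$ on $\left[0,u_{0}\right]$ and zero on $\left[u_{0},b\right]$; integration by parts, the Jacobi equation and the boundary term give $I_{\gamma}=0$ for this field, and then the classical corner-smoothing trick---adding $\epsilon\overrightarrow{Z}$ for a field $\overrightarrow{Z}$ transverse to the break---produces $I_{\gamma}<0$ for small $\epsilon$, contradicting semi-positivity.

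For the forward direction---the genuine content, essentially the Index Lemma---I would use the no-focal-point hypothesis to produce a frame $\{\overrightarrow{J}_{A}\}$ of $\Sigma$-Jacobi fields that are linearly independent and nowhere vanishing on $\left(0,b\right]$. Any admissible variation field $\overrightarrow{V}$ orthogonal to $\overrightarrow{N}$ can then be written as $V^{\mu}=\sum_{A}f^{A}J_{A}^{\mu}$; substituting into $I_{\gamma}$, integrating by parts and invoking the Jacobi equation together with the shape-tensor boundary term (which is exactly what the $\Sigma$-focal condition is designed to cancel) collapses the integrand into a perfect square, $I_{\gamma}\left(\overrightarrow{V},\overrightarrow{V}\right)=\int_{0}^{b}\left|\sum_{A}\dot{f}^{A}\overrightarrow{J}_{A}\right|^{2}du\geq 0$. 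Semi-positivity is then manifest, and $I_{\gamma}\left(\overrightarrow{V},\overrightarrow{V}\right)=0$ forces $\dot{f}^{A}=0$, so the coefficients are constant; the condition $\overrightarrow{V}\left(b\right)=0$ together with the linear independence of the $\overrightarrow{J}_{A}$ on $\left(0,b\right]$ then forces $\overrightarrow{V}$ to lie along $\overrightarrow{N}$, yielding the nondegeneracy clause.

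The main obstacle I anticipate lives entirely in the forward direction and is twofold: first, establishing the existence of the globally nonvanishing, linearly independent $\Sigma$-Jacobi frame on all of $\left(0,b\right]$ from the no-focal-point assumption, which is where the hypothesis is truly consumed and requires a continuity/Wronskian-type argument on the space of $\Sigma$-Jacobi fields; and second, bookkeeping the boundary term $K_{AB}\left(\overrightarrow{n}\right)v^{A}w^{B}$ so that the integration by parts produces precisely the clean square form. Because we work on a Lorentzian manifold with $\gamma$ merely causal, care is also needed to restrict attention to the spacelike orthogonal complement of $\overrightarrow{N}$, where the induced inner product is positive definite and the perfect-square estimate is valid; the component along $\overrightarrow{N}$ contributes trivially and accounts for the exceptional direction singled out in the statement.
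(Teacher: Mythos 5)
The paper offers no proof of this statement: it is imported verbatim from the literature (the citation to Kriele; the same result is Theorem~10.34 and the Index Lemma in O'Neill), so there is no ``paper's route'' to compare against. Your outline reconstructs exactly the standard argument that those references use --- the reverse direction via the broken $\Sigma$-Jacobi field plus the corner-smoothing perturbation $\epsilon\overrightarrow{Z}$, and the forward direction via the Index Lemma, writing $V^{\mu}=f^{A}J_{A}^{\mu}$ in a frame of $\Sigma$-Jacobi fields and collapsing $I_{\gamma}(V,V)$ to $\int_{0}^{b}\lvert\dot{f}^{A}\overrightarrow{J}_{A}\rvert^{2}\,du$ after the Wronskian cross-terms cancel (which uses the symmetry of the shape tensor to kill the boundary contribution). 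The outline is correct as far as it goes; the two places you flag as obstacles are indeed the only nontrivial ones, and I would add one more item of bookkeeping to the first: among the $n-1$ independent $\Sigma$-Jacobi fields, those encoding the normal directions of a higher-codimension $\Sigma$ vanish at $u=0$, so the coefficients $f^{A}$ are a priori singular there and one must check they extend continuously to $u=0$ (the usual l'H\^opital/limit argument in the Index Lemma). Your closing remark about quotienting by the degenerate direction $\overrightarrow{N}$ in the null case is precisely how the ``only if $\overrightarrow{V}$ is proportional to $\overrightarrow{N}$'' clause arises, so that part is handled correctly.
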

To assure the appearance of focal points to a hypersurface of arbitrary co-dimension $\Sigma$, Senovilla and Galloway develop a curvature condition. 
\begin{prop}
Let $\Sigma$ be a spacelike submanifold of co-dimension $m$ in a Lorentzian manifold of dimension $n$, and let $n_{\mu}$ be a future-pointing normal to $\Sigma$. If $\theta\left(\overrightarrow{n}\right)\equiv\left(m-n\right)c<0$, and the curvature tensor satisfies the inequality
\begin{equation}
R_{\mu\nu\rho\sigma}N^{\mu}N^{\rho}P^{\nu\sigma}\geq0
\end{equation}
along $\gamma$, then there is a point focal to $\Sigma$ along $\gamma$ at or before $q=\gamma\left(u=\frac{1}{c}\right)$, given that the curve had arrived so far.
\end{prop}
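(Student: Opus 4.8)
The plan is to prove this by the variational (Hessian/index-form) method, using Theorem 2.2 as the bridge between focal points and positivity of $I_\gamma$; it is the integrated, arbitrary-codimension counterpart of the Raychaudhuri focusing argument. I would argue by contradiction: assume $\gamma$ has \emph{no} focal point to $\Sigma$ on the segment up to $q=\gamma(1/c)$, so that by Theorem 2.2 the Hessian $I_\gamma$ is semi-positive definite there with $I_\gamma(V,V)=0$ only when $\overrightarrow{V}\propto\overrightarrow{N}$, and then exhibit an explicit test field that violates this equality clause.

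First I would set up the test fields. Choose an orthonormal basis $\{\overrightarrow{e}_A\}$ of $T_p\Sigma$ and let $\overrightarrow{E}_A$ be their parallel transports along $\gamma$, so that $N^\rho\nabla_\rho E_A^\mu=0$, the $E_A$ remain orthonormal and orthogonal to $N^\mu$, and $P^{\mu\nu}=\sum_A E_A^\mu E_A^\nu$. For a scalar profile $f(u)$ with $f(0)=1$ and $f(b)=0$ (so the field vanishes at $q$ and is tangent to $\Sigma$ at $u=0$, making it a legitimate element of $T_\gamma(\Omega(\Sigma,q))$), set $V_A=fE_A$. Because $E_A$ is parallel, $N^\rho\nabla_\rho V_A^\nu=f'E_A^\nu$, so the kinetic term of the Hessian collapses to $(f')^2$ for each $A$.

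Next I would sum the individual index forms over $A$. Using orthonormality and $P^{\mu\nu}=\sum_A E_A^\mu E_A^\nu$, the kinetic contributions add to $(f')^2(n-m)$, the curvature terms assemble into $-f^2R_{\mu\nu\rho\sigma}N^\mu N^\rho P^{\nu\sigma}$, and the boundary terms combine into $\gamma^{AB}K_{AB}(\overrightarrow{n})=\theta(\overrightarrow{n})$, yielding
\begin{equation}
\sum_A I_\gamma(V_A,V_A)=\int_0^b\!\big[(f')^2(n-m)-f^2R_{\mu\nu\rho\sigma}N^\mu N^\rho P^{\nu\sigma}\big]\,du+\theta(\overrightarrow{n}).
\end{equation}
I would then make the tuned choice $f(u)=1-cu$ with $b=1/c$, so that $f$ vanishes exactly at $q$ and $(f')^2=c^2$. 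The first integral evaluates to $c(n-m)$, the curvature integral is non-positive by the hypothesis $R_{\mu\nu\rho\sigma}N^\mu N^\rho P^{\nu\sigma}\geq0$, and $\theta(\overrightarrow{n})=(m-n)c=-(n-m)c$ cancels the kinetic contribution exactly, leaving $\sum_A I_\gamma(V_A,V_A)\leq 0$.

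Finally I would close the contradiction: under the no-focal-point assumption each $I_\gamma(V_A,V_A)\geq0$, so all must vanish, and the equality clause of Theorem 2.2 then forces every $V_A\propto\overrightarrow{N}$ on $\gamma$; but $V_A=fE_A$ is spacelike and orthogonal to $N^\mu$ wherever $f\neq0$ (in particular near $u=0$, where $f=1$) and is not identically zero, so this is impossible. Hence a focal point occurs at or before $q$. The main obstacle I anticipate is the careful bookkeeping that makes the three pieces of the Hessian condense precisely into the invariants $R_{\mu\nu\rho\sigma}N^\mu N^\rho P^{\nu\sigma}$ and $\theta(\overrightarrow{n})$, together with the verification that the linear profile $f=1-cu$ is exactly what drives the bound to zero at $u=1/c$; the rest is the standard argument that the equality case of Theorem 2.2 cannot be realised by a field built from the spacelike, $N$-orthogonal frame $\overrightarrow{E}_A$.
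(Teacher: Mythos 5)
Your argument is correct and is essentially the proof of this proposition given by Galloway and Senovilla in the cited reference \cite{SG}; the paper itself only states the result without proof. The index-form computation with the test fields $V_A=(1-cu)\overrightarrow{E}_A$, the cancellation of the kinetic term against $\theta(\overrightarrow{n})=(m-n)c$, and the contradiction with the equality clause of Theorem II.2 all match the standard argument, so there is nothing to add.
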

This condition can be interpreted as a manifestation of the attractive character of gravity.

Based on this focalisation theorem, Senovilla and Galloway prove a generalisation of the Penrose and Hawking-Penrose theorem.
The first result predicts the incompleteness of null geodesics:
\begin{thm}
Let $\left(M, g\right)$ contain a non-compact Cauchy hypersurface $S$ and a closed future trapped submanifold $\Sigma$ of arbitrary co-dimension. If the curvature condition holds along every future directed null geodesic emanating orthogonally from $\Sigma$, then $(M, g)$ is future null geodesically incomplete.
\end{thm}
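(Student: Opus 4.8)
The plan is to argue by contradiction, adapting Penrose's topological argument to the higher co-dimension setting. I would assume that $(M,g)$ is future null geodesically complete and derive that the Cauchy hypersurface $S$ must be compact, contradicting the hypothesis. The whole strategy hinges on showing that the achronal boundary $\partial I^{+}(\Sigma)$ is a \emph{compact} set once completeness is assumed.

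First I would exploit the causal structure. The existence of the Cauchy hypersurface $S$ makes $(M,g)$ globally hyperbolic, so $J^{+}(\Sigma)$ is closed and its boundary $\partial J^{+}(\Sigma)=\partial I^{+}(\Sigma)$ is an achronal, closed, embedded $C^{0}$ hypersurface. By the standard structure theory of achronal boundaries, $\partial I^{+}(\Sigma)$ is ruled by null geodesic generators which, traced to the past, are either past-inextendible or possess a past endpoint on $\Sigma$. The key geometric input is that any generator meeting $\Sigma$ must do so \emph{orthogonally}: a non-orthogonal intersection would allow a short timelike deformation connecting $\Sigma$ to a nearby boundary point, violating achronality. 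Thus every generator of $\partial I^{+}(\Sigma)$ is a segment of a future-directed null geodesic emanating orthogonally from $\Sigma$, precisely the class of curves to which the focalisation Proposition applies.

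Next I would invoke that Proposition together with the trapping hypothesis. Since $\Sigma$ is closed and future trapped, $\theta(\overrightarrow{n})<0$ for every future-pointing null normal, and compactness yields a uniform bound $c_{0}>0$ such that a focal point occurs along each orthogonal null geodesic at or before the affine value $u=1/c_{0}$. By the index-form characterisation (the Hessian theorem above), a null geodesic cannot remain on the achronal boundary past a focal point to $\Sigma$: beyond it the geodesic enters the chronological future $I^{+}(\Sigma)$ and therefore leaves $\partial I^{+}(\Sigma)$. Hence each generator has affine length at most $1/c_{0}$ measured from $\Sigma$. The future-directed null normal directions over $\Sigma$ form an $S^{m-2}$-bundle, which is compact because $\Sigma$ is; composing this compact set with the interval $[0,1/c_{0}]$ and with the (everywhere-defined, by the completeness assumption) exponential map exhibits $\partial I^{+}(\Sigma)$ as the continuous image of a compact set. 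Being also closed, $\partial I^{+}(\Sigma)$ is therefore compact.

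Finally I would close the argument topologically. Flowing $\partial I^{+}(\Sigma)$ down to $S$ along the integral curves of a timelike vector field defines a map that is injective by achronality and open by invariance of domain (both are $(n-1)$-manifolds). Its image is thus open, and compact hence closed, in the connected manifold $S$; it must equal all of $S$, forcing $S$ to be compact and contradicting its assumed non-compactness. I expect the main obstacle to be the arbitrary co-dimension bookkeeping in the middle step: establishing that the boundary generators are exactly the orthogonal null geodesics and that the relevant bundle of null normals is compact requires careful handling of the $m$-dimensional normal spaces. In the familiar co-dimension two case one has only a discrete pair of null normals, whereas here one must control a whole sphere of them uniformly, while ensuring the focalisation bound and the escape into $I^{+}(\Sigma)$ hold along every direction.
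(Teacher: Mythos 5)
Your argument is correct and coincides with the proof the paper relies on: the theorem is stated here without proof, being quoted from Galloway and Senovilla~\cite{SG}, and their argument is exactly your Penrose-style contradiction --- compactness of $E^{+}\left(\Sigma\right)=\partial J^{+}\left(\Sigma\right)$ obtained by applying the focalisation proposition uniformly over the compact bundle of future-pointing null normals (an $S^{m-2}$-bundle over the compact $\Sigma$), followed by the injective, open flow onto the non-compact Cauchy hypersurface $S$. The paper itself implicitly reuses precisely these ingredients later, in the proof of Proposition III.5, where it invokes $\partial J^{+}\left(\Lambda\right)=E^{+}\left(\Lambda\right)$ in a globally hyperbolic region and the maximum $\left(m-n\right)C$ of $\theta\left(\overrightarrow{n}\right)$ over the compact $\Sigma$, so your only implicit step --- fixing a normalisation of the null normals before taking that maximum --- is the same one the paper takes for granted.
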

The second theorem is based  on the Hawking-Penrose lemma, which is valid for arbitrary dimension, that states that this three conditions cannot all hold: 
\begin{itemize}
\item Every inextensible causal geodesic contains a pair of conjugate points.
\item There are not closed timelike curves (chronology condition).
\item there is an achronal set $\Sigma$ such that $E^{+}\left(\Sigma\right)$ is compact.  
\end{itemize}
It is an established result~\cite{HE, Wald} that the first statement holds if $R_{\mu\nu}v^{\mu}v^{\nu}\geq0$ for every non-spacelike vector $v^{\mu}$. When applied to timelike vectors it is known as the \emph{timelike convergence condition}, while in the case of null ones it is called the \emph{null convergence condition}. Using the Einstein field equations we can rewrite these conditions in terms of the energy momentum tensor $T_{\mu\nu}$. The equivalent of the timelike convergence is the \emph{strong energy condition}, $T_{\mu\nu}v^{\mu}v^{\nu}\geq\frac{1}{2}T$, and for the null one the {weak energy condition}, $T_{\mu\nu}v^{\mu}v^{\nu}\geq0$, where $T$ is the trace of the energy-momentum tensor.

Now we can review the generalization of the H-P theorem:
\begin{thm}
If the chronology, generic, timelike and null convergence conditions hold and there is a closed future trapped submanifold $\Sigma$ of arbitrary co-dimension such that the curvature condition holds along every null geodesic emanating orthogonally from $\Sigma$, then the spacetime is causal geodesically incomplete.
\end{thm}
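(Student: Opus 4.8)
The plan is to argue by contradiction, assuming that $(M,g)$ is causal geodesically \emph{complete}, and then to show that under this assumption the three mutually incompatible statements of the Hawking--Penrose lemma all hold at once. Since that lemma (valid in arbitrary dimension) forbids the three from holding simultaneously, the completeness assumption must be false, which is exactly the desired conclusion. The whole proof therefore reduces to verifying, one by one, the three bullet points of the lemma.

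The chronology condition is assumed outright, so the second bullet is immediate. For the first bullet---that every inextensible causal geodesic contains a pair of conjugate points---I would invoke the established consequence of the Raychaudhuri equation: the timelike and null convergence conditions force the expansion $\theta$ of the relevant congruence to diverge to $-\infty$ within finite affine parameter, producing a conjugate point, while the generic condition guarantees that the tidal curvature term is nonzero somewhere along each geodesic, so the focusing is genuinely triggered rather than evaded. Applied to timelike geodesics this uses the strong/timelike convergence condition and to null geodesics the weak/null convergence condition; together with genericity one obtains conjugate points on every inextensible causal geodesic.

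The substance of the argument lies in the third bullet: exhibiting an achronal set whose future horismos $E^{+}$ is compact. Here I would take that set to be (a locally achronal piece of) the closed future trapped submanifold $\Sigma$. The future horismos $E^{+}(\Sigma)=J^{+}(\Sigma)\setminus I^{+}(\Sigma)$ is generated by null geodesics leaving $\Sigma$ orthogonally. Because $\Sigma$ is future trapped we have $\theta(\overrightarrow{n})=(m-n)c<0$ with $c>0$ (since $m<n$), so the focalisation Proposition applies: along every such orthogonal null generator, the assumed curvature condition yields a focal point at or before $u=1/c$. Past its focal point a generator enters $I^{+}(\Sigma)$ and hence leaves $E^{+}(\Sigma)$. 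Using completeness to guarantee that the normal exponential map is defined for all $u\in[0,1/c]$, every point of $E^{+}(\Sigma)$ is the image under this continuous map of a point in the bundle of orthogonal null directions over $\Sigma$ with affine parameter in the compact interval $[0,1/c]$. As $\Sigma$ is closed and the parameter range is compact, that source set is compact, its image is compact, and $E^{+}(\Sigma)$---a closed subset of it---is compact as well. With all three statements in hand, the lemma supplies the contradiction.

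The main obstacle will be this third step, and within it the control of the arbitrary codimension $m$. One must confirm that $E^{+}(\Sigma)$ is swept out \emph{only} by null geodesics issuing orthogonally from $\Sigma$, so that the focal-point count from the Proposition applies to every generator, and that the trapped inequality $\theta(\overrightarrow{n})<0$ for \emph{all} normals feeds correctly into the single bound $u=1/c$ when the normal bundle has dimension $m>1$. The achronality needed to invoke the lemma is a further point of care, since a compact spacelike submanifold of codimension $m$ need not be globally achronal. The compactness conclusion itself is then a routine application of the fact that a closed subset of a compact set is compact; the genuine work is in setting up the orthogonal null normal bundle and matching it to the focalisation Proposition in codimension $m$.
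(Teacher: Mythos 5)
Your proposal follows essentially the same route as the paper, which itself only sketches the argument: deduce from the trapped submanifold, the curvature condition and the focalisation Proposition that there is an achronal set with compact $E^{+}$ (deferring to Senovilla--Galloway for the technical points you flag, namely achronality and the closedness of $E^{+}(\Sigma)$), and then invoke the Hawking--Penrose lemma together with the convergence and generic conditions to contradict causal geodesic completeness. In fact your write-up supplies more detail than the paper's two-sentence sketch, and the obstacles you identify in the third step are precisely the ones the cited references are meant to resolve.
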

\emph{Sketch of the proof. } First of all, it has to be proven that the existence of closed trapped submanifolds leads to the existence of an achronal set with the properties mentioned in the lemma~\cite{SG}. Once the H-P lemma is proved, this theorem can be easily deduced, as it is explained in~\cite{HE}.

\section{Black hole regions}
We know from experience, e.g. the Schwarzschild metric, that the existence of incomplete null geodesics leads to the appearance of \emph{black holes}, that are regions of the spacetime that once an observer enters them, it cannot leave. This applies to all timelike and null curves, not just geodesics. This is usually known as the \emph{cosmic censorship conjecture}, which is a concept that Penrose introduced in 1969. It basically states that singularities cannot be \emph{naked}, that means that they cannot be seen by an outside observer. However, how can we express this concept mathematically?
The answer lies in the concept of \emph{conformal compactification}, which can be defined as~\cite{Infinity}:
\begin{defn}
Let $\left(M,g\right)$ and $\left(\tilde{M},\,\tilde{g}\right)$ be two spacetimes. Then $\left(\tilde{M},\,\tilde{g}\right)$ is said to be a conformal compactification of $M$ if and only if the following properties are met:
\begin{enumerate}
\item $M$ is an open submanifold of $\tilde{M}$ with smooth boundary $\partial\tilde{M}=\mathcal{J}$. This boundary is usually denoted \emph{conformal infinity}.
\item There exists a smooth scalar field $\Omega$ on $\tilde{M}$, such that $\tilde{g}_{\mu\nu}=\Omega^{2}g_{\mu\nu}$ on $M$, and so that $\Omega=0$ and its gradient $d\Omega\neq 0$ on $\mathcal{J}$.
\end{enumerate}
If additionally, every null geodesic in $M$ acquires a future and a past endpoint on $\mathcal{J}$, the spacetime is called \emph{asymptotically simple}. Also, if the Ricci tensor is zero in a neighbourhood of $\mathcal{J}$ the spacetime is said to be \emph{asymptotically empty}.
\end{defn} 
In a conformal compactification, $\mathcal{J}$  is composed by two null hypersurfaces, $\mathcal{J}^{+}$ and $\mathcal{J}^{-}$, known as \emph{future null infinity} and \emph{past null infinity} respectively.

In order to establish the definition of black hole, we need to introduce two more concepts~\cite{Wald}:
\begin{defn}
A spacetime $\left(M,g\right)$ is said to be \emph{asymptotically flat} if there is an asymptotically empty spacetime $\left(M',g'\right)$ and a neighbourhood $\mathcal{U}'$ of $\mathcal{J}'$, such that $\mathcal{U}'\cap M'$ is isometric to an open set $\mathcal{U}$ of $M$. 
\end{defn} 
\begin{defn}
Let  $\left(M,g\right)$ be an asymptotically flat spacetime with conformal compactification $\left(\tilde{M},\,\tilde{g}\right)$. Then $M$  is called \emph{(future) strongly asymptotically predictable} if there is an open region $\tilde{V}\subset\tilde{M}$, with $\overline{J^{-}\left(\mathcal{J}^{+}\right)\cap M}\subset\tilde{V}$, such that $\tilde{V}$ is globally hyperbolic. 
\end{defn}
This definition does not require the condition of the endpoints of the null geodesics, meaning that this kind of spacetimes can be singular. Nevertheless, if a spacetime is asymptotically predictable, then the singularities are not naked, i.e. are not visible from $\mathcal{J}^{+}$.

Now we can establish what we understand by a black hole:
\begin{defn}
A strongly asymptotically predictable spacetime $\left(M,g\right)$ is said to contain a black hole if $M$ is not contained in $J^{-}\left(\mathcal{J}^{+}\right)$. The \emph{black hole region}, $B$, is defined to be $B=M-J^{-}\left(\mathcal{J}^{+}\right)$ and its boundary, $\partial B$, is known as the \emph{event horizon}. 
\end{defn} 
Intuitively, we think that a particle in a closed trapped surface cannot scape to $\mathcal{J}^{+}$, meaning that it is part of the black hole region of the spacetime. Nevertheless, this is not true in general. In the next proposition we establish the conditions that ensure the existence of black holes when we have a closed future trapped submanifold of arbitrary co-dimension:
\begin{prop}
Let $\left(M,g\right)$ be a strongly asymptotically predictable spacetime of dimension $n$, and $\Sigma$ a closed future trapped submanifold of arbitrary co-dimension $m$ in $M$. If the curvature condition holds along every future directed null geodesic emanating orthogonally from $\Sigma$, then $\Sigma$ cannot intersect $J^{-}\left(\mathcal{J}^{+}\right)$, i.e. $\Sigma$ is in the black hole region $B$ of $M$ \footnote{Analogously, it can be defined a past strongly asymptotically predictable space time, and then the proposition would predict the existence of white hole regions,$B=M-J^{+}\left(\mathcal{J}^{-}\right)$, that are regions that particles cannot enter, only exit.}.
\end{prop}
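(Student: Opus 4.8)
The plan is to argue by contradiction, assuming that $\Sigma$ does meet $J^{-}(\mathcal{J}^{+})$, and then to exhibit a null generator of the boundary $\partial J^{+}(\Sigma)$ that runs all the way out to $\mathcal{J}^{+}$ without ever developing a focal point, in direct conflict with the focalisation Proposition stated above. Since the spacetime is strongly asymptotically predictable, I would first pass to the globally hyperbolic region $\tilde{V}$ containing $\overline{J^{-}(\mathcal{J}^{+})\cap M}$ and carry out all causal constructions there, where $J^{+}(\Sigma)$ is closed and its achronal boundary $\partial J^{+}(\Sigma)$ is a $C^{0}$ hypersurface ruled by null geodesic generators.

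The central mechanism is that focal points drive generators off the boundary. First I would invoke the standard fact that each null generator of $\partial J^{+}(\Sigma)$, traced to the past, has a past endpoint on $\Sigma$ and meets $\Sigma$ orthogonally, achronality of the boundary forcing this orthogonality. Next I would use that beyond a focal point a null geodesic issuing orthogonally from $\Sigma$ can be deformed into a timelike curve, so it enters $I^{+}(\Sigma)$ and therefore leaves $\partial J^{+}(\Sigma)$; consequently no generator can carry a focal point to $\Sigma$ in its interior. Combining this with the focalisation Proposition --- which, since $\Sigma$ is future trapped so that $\theta(\overrightarrow{n})<0$ for every future null normal, and since $\Sigma$ is compact so that $\theta(\overrightarrow{n})\leq(m-n)c_{0}$ for some uniform $c_{0}>0$, guarantees a focal point at affine parameter $u\leq 1/c_{0}$ along every such geodesic --- I conclude that every generator of $\partial J^{+}(\Sigma)$ has affine length at most $1/c_{0}$.

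To close the argument I would exploit the connectedness of $\mathcal{J}^{+}$. If some point $p\in\Sigma$ lies in $J^{-}(\mathcal{J}^{+})$, then there is $q\in\mathcal{J}^{+}$ with $q\in J^{+}(\Sigma)$; since $\mathcal{J}^{+}$ is non-compact it also contains points outside $J^{+}(\Sigma)$, so by connectedness $\mathcal{J}^{+}$ must meet the boundary, giving a point $r\in\mathcal{J}^{+}\cap\partial J^{+}(\Sigma)$. The generator of $\partial J^{+}(\Sigma)$ through $r$ is then a focal-point-free null geodesic running from $\Sigma$ out to $\mathcal{J}^{+}$; but reaching future null infinity means it extends past affine parameter $1/c_{0}$, at which point the focalisation Proposition forces a focal point and hence expulsion from the boundary --- a contradiction. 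Therefore $\Sigma\cap J^{-}(\mathcal{J}^{+})=\emptyset$, i.e.\ $\Sigma$ lies in the black hole region $B$.

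The step I expect to be the main obstacle is the rigorous control of the generators as they approach $\mathcal{J}^{+}$: I must be sure that a null geodesic reaching conformal infinity genuinely exhausts affine parameter beyond $1/c_{0}$ in the physical metric, so that focalisation applies, and that the uniform bound $c_{0}$ survives the passage to the conformal completion. Handling the interplay between the physical and conformally rescaled affine parameters, and confirming that the generators meet $\Sigma$ orthogonally in arbitrary co-dimension $m$ rather than only in the familiar co-dimension-two case, is where the care is required; the compactness and connectedness bookkeeping is then routine.
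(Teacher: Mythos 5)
Your proposal follows essentially the same route as the paper's proof (which itself adapts Wald's Proposition 12.2.2): contradiction, passage to the globally hyperbolic region $\tilde{V}$, connectedness of $\mathcal{J}^{+}$ to produce a point of $\mathcal{J}^{+}$ on $\partial J^{+}$, identification of the generator through that point as a null geodesic orthogonal to $\Sigma$ free of focal points, future completeness in the physical metric, and the uniform focalisation bound from compactness of $\Sigma$. The only substantive difference is bookkeeping: where you appeal to non-compactness of $\mathcal{J}^{+}$ to find points outside $J^{+}(\Sigma)$, the paper instead works with the compact set $\Lambda=\Sigma\cap\left(\overline{J^{-}\left(\mathcal{J}^{+}\right)\cap M}\right)$, uses closedness of $J^{+}(\Lambda)$ in $\tilde{V}$ and the fact that $i^{0}\notin J^{+}(\Lambda)$ to exhibit an open region of $\mathcal{J}^{+}$ missed by $J^{+}(\Lambda)$ --- a cleaner way to secure the step you flagged, and the future completeness of the generator in $(M,g)$ is the standard fact that resolves your worry about the affine parameter surviving the conformal rescaling.
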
 
\begin{proof}
This proof is similar to the one of Proposition 12.2.2 by Wald~\cite{Wald}. Let us suppose that $\Sigma$ intersects $J^{-}\left(\mathcal{J}^{+}\right)$. Then, in the conformal compactification $\tilde{M}$, we would have that $J^{+}\left(\Sigma\right)\cap\mathcal{J}^{+}\neq\emptyset$. On other hand, we know that the spatial infinity $i^{0}$, the point of the compactification where the future (past) complete spacelike geodesics end (begin), is not in the causal future of any point in $M$. Therefore it follows trivially that $i^{0}\notin J^{+}\left(\Sigma\right)$. Since $M$ is strongly asymptotically predictable, there is a globally hyperbolic region $\tilde{V}$ in the compactification such that $\overline{J^{-}\left(\mathcal{J}^{+}\right)\cap M}\subset\tilde{V}$. From basic topology we have that the intersection of two closed sets is closed, therefore $\Lambda=\Sigma\cap\left(\overline{J^{-}\left(\mathcal{J}^{+}\right)\cap M}\right)$ is closed, where clearly $\Lambda\subset\Sigma$ and $\Lambda\subset\tilde{V}$. In addition, a closed subset of a compact is also compact, so from the compactness of $\Sigma$ we deduce that $\Lambda$ is compact. It is an standard result of Lorentzian geometry that, in a globally hyperbolic space, the causal future of a compact set is closed~\cite{Wald}, so, in this case, we have that $J^{+}\left(\Lambda\right)$ is closed in $\tilde{V}$. This means that it contains all of its limit points, therefore since $i^{0}\notin J^{+}\left(\Lambda\right)$, there is an open neighbourhood of $i^{0}$ that does not intersect $J^{+}\left(\Lambda\right)$, and so, an open region of $\mathcal{J}^{+}$ that does not intersect $J^{+}\left(\Lambda\right)$. It is known that a connected set cannot contain a subset with no boundary (except for the empty set and the set itself)~\cite{Munkres}. As we have already proved, $J^{+}\left(\Lambda\right)\cap \mathcal{J}^{+}$ is not equal to $\mathcal{J}^{+}$. Since $\mathcal{J}^{+}$ is connected, it follows that there must be a point $q\in \mathcal{J}^{+}$ in $\partial J^{+}\left(\Lambda\right)$. In the proof of the generalised Penrose theorem we used that in a globally hyperbolic spacetime $\partial J^{+}\left(\Lambda\right)=E^{+}\left(\Lambda\right)$, so in the compactification $\tilde{M}$ there is a null geodesic $\gamma$ connecting $p\in \Lambda\subset\Sigma$ with $q$. Furthermore, using the Theorem 51 of O'Neill~\cite{Oneill}, we see that this null geodesic must be orthogonal to $\Sigma$ and not contain any focal point of $\Sigma$ before $q$, as otherwise we would have that $q\in I^{+}\left(\Lambda\right)$ and therefore $q\notin E^{+}\left(\Lambda\right)$. With respect to the metric $g$ of $M$, $\gamma$ is also a null geodesic orthogonal to $\Sigma$ with no focal point of $\Sigma$, but now $\gamma$ is future complete~\cite{Wald}. Although, since $\Sigma$ is future trapped one has $\theta\left(\overrightarrow{n}\right)\equiv\left(m-n\right)c<0$ for any future-pointing null normal one-form $n_{\mu}$~\cite{SG}. Now, let $\left(m-n\right)C$ be the maximum value of all possible $\theta\left(\overrightarrow{n}\right)$ on the compact $\Sigma$. Then, using the Proposition II.3, we have that every null geodesic emanating orthogonally from $\Sigma$ will have a focal point at or before the affine parameter reaches the value $\frac{1}{C}$. This clearly leads to a contradiction, therefore the assumption is false.
\end{proof}
This Proposition will help us to study the singularities in theories of gravitation that include torsion. But first, let us introduce the main aspects of these theories.

\section{General aspects of theories with torsion}
In this section, we introduce the geometrical background of gravitational theories that allow a non symmetric connection that still fulfills the metricity condition. The interesting fact about these theories is that they appear naturally as a gauge theory of the Poincar\'e Group~\cite{Gauge}, making their formalism closer to that of the Standard Model of Particles, and hence making it a good candidate to explore the quantization of gravity.

Since the connection is not necessarily symmetric, the torsion can be different from zero.   
For an arbitrary connection, that meets the metricity condition, there exists a relation between it and the Levi-Civita connection
\begin{equation}
\label{eq2}
\mathring{\Gamma}_{\,\,\mu\nu}^{\rho}=\Gamma_{\,\,\mu\nu}^{\rho}-K_{\,\,\mu\nu}^{\rho},
\end{equation}
where 
\begin{equation}
K_{\,\,\mu\nu}^{\rho}=\frac{1}{2}\left(T_{\,\,\mu\nu}^{\rho}-T_{\mu\,\,\nu}^{\,\,\rho}-T_{\nu\,\,\mu}^{\,\,\rho}\right)
\end{equation}
is the \emph{contortion} tensor.

Since the curvature tensors depend on the connection, there is a relation between the ones defined throughout the Levi-Civita connection and the general ones. For the Riemann tensor we have~\cite{Shapiro}
\begin{equation}
\label{eq10}
\begin{array}{c}
\mathring{R}_{\,\,\mu\nu\rho}^{\sigma}=R_{\,\,\mu\nu\rho}^{\sigma}-\mathring{\nabla}_{\nu}K_{\,\,\mu\rho}^{\sigma}+\mathring{\nabla}_{\rho}K_{\,\,\mu\nu}^{\sigma}-\\
\,\\
-K_{\,\,\alpha\nu}^{\sigma}K_{\,\,\mu\rho}^{\alpha}+K_{\,\,\alpha\rho}^{\sigma}K_{\,\,\mu\nu}^{\alpha},
\end{array}
\end{equation}
where the upper index $\mathring{\,}$ denotes the Levi-Civita quantities.
By contraction we can obtain the expression for the Ricci tensor
\begin{equation}
\label{eq11}
\begin{array}{c}
\mathring{R}_{\mu\rho}=R_{\mu\rho}-\mathring{\nabla}_{\sigma}K_{\,\,\mu\rho}^{\sigma}+\mathring{\nabla}_{\rho}K_{\,\,\mu\sigma}^{\sigma}-\\
\,\\
-K_{\,\,\alpha\sigma}^{\sigma}K_{\,\,\mu\rho}^{\alpha}+K_{\,\,\alpha\rho}^{\sigma}K_{\,\,\mu\sigma}^{\alpha},
\end{array}
\end{equation}
and the scalar curvature
\begin{equation}
\begin{array}{c}
\mathring{R}=g^{\mu\rho}\mathring{R}_{\mu\rho}=R-\mathring{\nabla}^{\rho}K_{\,\,\sigma\rho}^{\sigma}-\\
\,\\
-K_{\,\,\alpha\sigma}^{\sigma}K_{\,\,\,\,\,\,\rho}^{\alpha\rho}+K_{\,\,\sigma\rho}^{\alpha}K_{\,\,\mu\alpha}^{\sigma}.
\end{array}
\end{equation}
All the theories that we will consider from now on will follow these geometrical properties, the only change would be the underlying physical theory.

\section{Singularities in Teleparallel Gravity}
TEGR is a degenerate case of the Poincar\'e gauge theories, since it is a gauge theory of the translation group only. Any gauge theory including these transformations will differ from the usual internal gauge models in many ways, the most significant being the presence of a tetrad field~\cite{TG2}. Given a nontrivial tetrad $h^{a}_{\,\mu}$, it is possible to define a connection known as Weitzenb\"{o}ck connection
\begin{equation}
\Gamma_{\,\,\mu\nu}^{\rho}=h_{a}^{\,\,\rho}\partial_{\mu}h_{\,\,\mu}^{a},
\end{equation}
that presents torsion, but no curvature. 
With this tetrad field we can also construct the Levi-Civita connection, taking into account that the metric can be expressed as
\begin{equation}
g_{\mu\nu}=\eta_{ab}h_{\,\,\mu}^{a}h_{\,\,\nu}^{b},
\end{equation}
where $\eta_{ab}$  is the Lorentz-Minkowski metric, and using the usual definition, as seen in Equation (\ref{eq1}).

The relation between these two connections is given by Equation (\ref{eq2}).
The Lagrangian density of this gravitational theory can be written as
\begin{equation}
\mathcal{L}=\frac{hc^{4}}{16\pi G}S^{\mu\nu\rho}T_{\mu\nu\rho},
\end{equation}
where $h=det\left(h_{\,\,\mu}^{a}\right)$, and
\begin{equation}
S^{\mu\nu\rho}=-S^{\mu\rho\nu}\equiv\frac{1}{2}\left(K^{\nu\rho\mu}+g^{\mu\rho}T_{\,\,\,\,\,\,\sigma}^{\sigma\nu}+g^{\mu\nu}T_{\,\,\,\,\,\,\sigma}^{\sigma\rho}\right),
\end{equation}
which is usually known as \emph{superpotential}.

Using the relation between the Weitzenb\"{o}ck and the Levi-Civita connection in Equation (\ref{eq2}) we can express this Lagrangian as
\begin{equation}
\mathcal{L}=\mathring{\mathcal{L}}-\partial_{\mu}\left(\frac{hc^{4}}{8\pi G}T_{\,\,\,\,\,\,\nu}^{\nu\mu}\right),
\end{equation}
where $\mathring{\mathcal{L}}$ is the Einstein-Hilbert Lagrangian of GR. Since they are equal except for a total divergence, the same field equations arise. Therefore it is a theory equivalent to GR, as it can be seen for example when one studies the junction conditions~\cite{Alvaro}.

The field equations can be obtained by taking variations of the Lagrangian. Expressing them in pure spacetime form, we have
\begin{equation}
\partial_{\sigma}\left(hS_{\mu}^{\,\,\sigma\nu}\right)-\frac{4\pi G}{c^{4}}\left(ht_{\mu}^{\,\,\nu}\right)=0,
\end{equation}
where
\begin{equation}
ht_{\mu}^{\,\,\nu}=\frac{hc^{4}}{4\pi G}\Gamma_{\,\,\rho\mu}^{\sigma}S_{\sigma}^{\,\,\rho\nu}+\delta_{\mu}^{\,\,\nu}\mathcal{L}
\end{equation}
is the canonical \emph{energy-momentum pseudotensor} of the gravitational field.
Although this is the simplest framework for a theory with torsion, it is helpful for introducing the methods that we will use in more general cases. In that sense, the next considerations are general, and can be applied in all the theories of gravitation.\\

In GR we have considered geodesic incompleteness as a criterium of the appearance of singularities, based on the fact that causal geodesics are the trajectories of free-falling observers. Therefore, we wish to modify this criteria by terms of these trayectories  in the theory that we are considering. We will say that our spacetime is singular if the domain of the affine parameter of at least one curve that follow any free-falling observer (including photons) is different from $\mathbb{R}$. For spacetimes in which we can define a conformal boundary, as the ones considered in Section III, this can be stated in the following way:
\begin{defn}
A spacetime $\left(M,g\right)$, endowed with a conformal compactification, is said to be singular if at least one non-spacelike curve has an endpoint outside the conformal infinity.
\end{defn}
Before continuing, it is useful to define two important classes of curves, which coincide in the case of the Levi-Civita connection~\cite{Hehl}:
\begin{itemize}
\item \textbf{Autoparallel curves}: these are the curves in which its tangent vector $v^{\mu}$ is parallel transported to itself, that is:
\begin{equation}
v^{\mu}\nabla_{\mu}v^{\nu}=0.
\end{equation}
The differential equation of the autoparallels is, under a suitable choice of the affine parameter:
\begin{equation}
\frac{dv^{\mu}}{dt}+\Gamma_{\rho\sigma}^{\mu}v^{\rho}v^{\sigma}=0,
\end{equation}
which only takes into account the symmetric part of the connection.
\item \textbf{Extremal curves}: these are the ones that extremise the length with respect to the metric of the manifold. It is worth mentioning that the length only depends on the metric, and not on the torsion. In order to see what are the equations of these curves we recall a standard result from Lorentzian geometry, that can be used as a definition:
\begin{thm}
Let $\gamma$ be a smooth timelike curve connecting two points $p,q\in M$.  Then the necessary and sufficient condition that $\gamma$ locally maximizes the length between $p$ and $q$ over smooth one parameter variations is that $\gamma$ is a geodesic with no point conjugate to $p$ between $p$ and $q$. 
\end{thm}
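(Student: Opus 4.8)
\emph{Proof proposal.} The plan is to treat this as a standard variational problem and split it into a first-order condition, which will force $\gamma$ to be a geodesic, and a second-order condition, which for a timelike geodesic ties local maximisation of the length to the absence of conjugate points. First I would fix a smooth one-parameter variation $\gamma_s(t)$ with $\gamma_0=\gamma$ and fixed endpoints $p=\gamma_s(0)$, $q=\gamma_s(b)$, and compute the first variation of the length functional $L[\gamma_s]=\int_0^b\sqrt{-g(\dot\gamma_s,\dot\gamma_s)}\,dt$. After reparametrising $\gamma$ by proper time so that $g(\dot\gamma,\dot\gamma)=-1$, an integration by parts expresses the first variation as the integral of the deviation vector contracted with the acceleration $\nabla_{\dot\gamma}\dot\gamma$, plus boundary terms that vanish under the fixed-endpoint condition. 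Demanding that this vanish for every admissible variation forces $\nabla_{\dot\gamma}\dot\gamma=0$, so $\gamma$ must be a geodesic; this settles the necessity of the geodesic property for $\gamma$ to be even a critical point of $L$.

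Assuming henceforth that $\gamma$ is a geodesic, I would compute the second variation, which for deviation fields orthogonal to $\dot\gamma$ reduces precisely to the Hessian $I_\gamma(V,V)$ displayed above, specialised to the point-to-point case by taking $\Sigma=\{p\}$; the shape-tensor boundary term $K_{AB}(\overrightarrow{n})v^A w^B$ then drops out because a point has no tangent directions, and focal points of $\Sigma$ become points conjugate to $p$. With the sign convention fixed so that the second variation of the timelike length equals $-I_\gamma$, a local maximum corresponds to $I_\gamma$ being positive semi-definite with degeneracy only along $\overrightarrow{N}$. Invoking Theorem II.2 with $\Sigma=\{p\}$, this definiteness of the index form on the space of proper (endpoint-vanishing, transverse) variations is equivalent to the absence of a point conjugate to $p$ strictly before $q$, which yields the sufficiency direction: a geodesic with no intervening conjugate point locally maximises the length.

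For the converse I would argue the contrapositive, showing that a conjugate point strictly between $p$ and $q$ destroys maximality. The key construction is the broken Jacobi field: take the nontrivial Jacobi field $J$ along $\gamma$ that vanishes at $p$ and at the conjugate point $r=\gamma(t_*)$, extend it by zero on $[t_*,b]$, and note that the resulting piecewise-smooth field satisfies $I_\gamma(J,J)=0$ yet fails to be Jacobi across the corner at $r$. Smoothing this corner produces a nearby smooth variation field $V$ for which $I_\gamma(V,V)$ acquires strictly the sign that, by the characterisation in Theorem II.2, signals the failure of definiteness, hence a nearby timelike curve that is strictly longer than $\gamma$. I expect the main obstacle to be exactly this last estimate: the formal vanishing $I_\gamma(J,J)=0$ is not enough, and one must show quantitatively that rounding the corner of the broken Jacobi field yields a genuine second-order gain in length rather than a degeneracy. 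Assembling the three ingredients — the first variation giving the geodesic equation, the second variation identified with $I_\gamma$ together with Theorem II.2, and the corner-smoothing argument — establishes both the necessity and the sufficiency asserted in the statement.
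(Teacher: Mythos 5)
The paper does not actually prove this statement: it is quoted as ``a standard result from Lorentzian geometry, that can be used as a definition'' (it is Theorem 9.5.3 of Wald, or the corresponding results in O'Neill and Hawking--Ellis), so there is no in-paper argument to compare against. Your outline is the standard textbook proof and its architecture is correct: first variation forcing the geodesic equation, identification of the second variation with the index form specialised to $\Sigma=\{p\}$ (where the boundary term drops and focal points become conjugate points, the parenthetical case of Definition II.1), Theorem II.2 for the equivalence with definiteness, and the broken Jacobi field for the necessity of the no-conjugate-point condition. Two places deserve tightening. First, in the sufficiency direction, ``the second variation is (semi-)definite at a critical point'' does not by itself imply a local maximum in an infinite-dimensional path space; the standard ways to close this are either the finite-dimensional reduction to broken geodesics (the Morse-theoretic route) or the observation that absence of conjugate points makes $\exp_p$ nonsingular along $\gamma$, so one can build a Gaussian normal foliation and compare lengths of nearby timelike curves directly. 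Second, the corner-smoothing estimate you correctly flag is made quantitative by taking $V_\epsilon=J+\epsilon W$ with $W$ smooth, vanishing at the endpoints and satisfying $W(t_*)=-\nabla_{\dot\gamma}J(t_*^-)$; integrating by parts on the two smooth pieces gives $I_\gamma(J,W)=\pm\,g\bigl(\nabla_{\dot\gamma}J(t_*^-),W(t_*)\bigr)\neq0$, so $I_\gamma(V_\epsilon,V_\epsilon)=2\epsilon I_\gamma(J,W)+O(\epsilon^2)$ can be given the sign that violates Theorem II.2 for small $\epsilon$. With those two points made explicit, and keeping track of the sign convention relating the timelike second variation to $I_\gamma$, your proposal is a faithful rendering of the standard proof.
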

Then, the differential equations of these curves are the same of the Levi-Civita geodesics:
\begin{equation}
\frac{dv^{\mu}}{dt}+\mathring{\Gamma}_{\rho\sigma}^{\mu}v^{\rho}v^{\sigma}=0,
\end{equation}
 \end{itemize}
The trajectories of free-falling observers in theories different from GR do not follow these curves in general. Nevertheless, in TEGR they do. The equation of motion for free falling observers, scalar particles, is~\cite{TG2}:
\begin{equation}
\frac{dv_{\mu}}{dt}+\Gamma_{\rho\sigma\mu}v^{\rho}v^{\sigma}=0,
\end{equation}
which is equivalent to
\begin{equation}
\frac{dv^{\mu}}{dt}+\mathring{\Gamma}_{\rho\sigma}^{\mu}v^{\rho}v^{\sigma}=0.
\end{equation}
Therefore they follow extremal curves, which are the autoparallels of the Levi-Civita connection.

It is particular interesting to discuss this issue for photons. It has been stated that Maxwell equations do not couple to torsion in the minimal approach. However, in TEGR the electromagnetic field is able to couple to torsion without violating gauge invariance~\cite{TG}. Using the relation between the Levi-Civita and the Weitzenb\"{o}ck connection, one can verify that the teleparallel version of Maxwell's equations are completely equivalent with the usual Maxwell's equations in the context of GR. This means that they move according to the geodesic equation of GR, and so the causal structure is the same as in GR.

This discussion is more general. In fact, the equivalence between TEGR and GR means that all the singularity theorems developed in GR apply to this theory also. Therefore, the causal convergence and the curvature condition remain the same, although the expression for the Riemann and Ricci tensor change as discussed in the previous section, specifically in Equations (\ref{eq10}, \ref{eq11}).

\section{Singularities in Einstein-Cartan theory}
The Einstein-Cartan (EC) theory of gravitation is the most recognised theory that includes torsion~\cite{Sabbata,Hehl}. The main reason to introduce this theory is the fact that it allows to consider massive spinning fields in a natural way, while maintaining all the experimental success of GR. This theory arises when searching for a gravitational Lagrangian linear in the curvature term $R{^\lambda}\,_{\rho \mu \nu}$. The geometrical structure is the one analysed in section IV.

The field equations are obtained by varying the Lagrangian of this theory with respect to the metric and the contortion:
\begin{equation}
R_{\mu\nu}-\frac{1}{2}Rg_{\mu\nu}=\kappa\Sigma_{\mu\nu}
\end{equation}
and
\begin{equation}
\label{eq4}
S_{\mu\nu\rho}=\kappa\tau_{\mu\nu\rho},
\end{equation}
where
\begin{equation}
S_{\mu\nu}^{\,\,\,\,\,\rho}=T_{\mu\nu}^{\,\,\,\,\,\rho}+\delta_{\mu}^{\rho}T_{\nu\sigma}^{\,\,\,\,\,\sigma}-\delta_{\nu}^{\rho}T_{\mu\sigma}^{\,\,\,\,\,\sigma}
\end{equation}
is the modified torsion tensor.
At this point, we might wonder what are the trajectories of the free-falling observers, in order to establish some singularity theorems.

Since it is impossible to perform the minimally coupling prescription for the Maxwell's field while maintaining the $U\left(1\right)$ gauge invariance, the Maxwell equations are the same as in GR. Therefore, they move following null extremal curves, and so the causal structure is determined by the metric structure, just like in GR. Also, from the minimally couple procedure, it follows that particles with no spin, represented by scalar fields, do not feel torsion as well, since the covariant derivative of a scalar field is just its partial derivative. This means that the test particles follow the geodesics of the Levi-Civita connection, which allow us to generalise trivially the singularity theorems. Just like in TEGR, the causal convergence and the curvature conditions remain the same, it just changes the expression for the Levi-Civita Riemann and Ricci tensors, as given by Equations (\ref{eq10}, \ref{eq11}).

In any case, even for trajectories decoupled from torsion, energy conditions are modified.
Although the curvature condition is the same as in GR, these conditions change due to the fact that the field equations are different. Since Equation (\ref{eq4}) is purely algebraic we can substitute everywhere spin with torsion. Now we split the Einstein tensor into the Levi-Civita ($\mathring{G}^{\mu\nu}$) part and the rest, and we change the torsion terms by means of Equation (\ref{eq4}), obtaining
\begin{equation}
\label{eq5}
\mathring{G}^{\mu\nu}=\kappa\tilde{\sigma}^{\mu\nu},
\end{equation}
where $\tilde{\sigma}^{\mu\nu}$ is the combined energy-momentum tensor
\begin{equation}
\begin{array}{c}
\tilde{\sigma}_{\mu\rho}=\Sigma_{\mu\rho}-\nabla_{\sigma}K_{\,\,\mu\rho}^{\sigma}+\nabla_{\rho}K_{\,\,\mu\sigma}^{\sigma}-\\
\,\\
-K_{\,\,\alpha\sigma}^{\sigma}K_{\,\,\mu\rho}^{\alpha}+K_{\,\,\alpha\rho}^{\sigma}K_{\,\,\mu\sigma}^{\alpha}+\frac{1}{2}g_{\mu\rho}\left(\nabla^{\alpha}K_{\,\,\sigma\alpha}^{\sigma}+\right.\\
\,\\
\left.+K_{\,\,\alpha\sigma}^{\sigma}K_{\,\,\,\,\,\,\rho}^{\alpha\rho}-K_{\,\,\sigma\rho}^{\alpha}K_{\,\,\mu\alpha}^{\sigma}\right).
\end{array}
\end{equation}

Now, by using Equation (\ref{eq5}) we can write the energy conditions. The strong energy condition can be expressed as
\begin{equation}
\tilde{\sigma}_{\mu\nu}v^{\mu}v^{\nu}\geq\frac{1}{2}\tilde{\sigma},
\end{equation}
where $\tilde{\sigma}=g_{\mu\nu}\tilde{\sigma}^{\mu\nu}$. And for the weak energy condition we have
\begin{equation}
\tilde{\sigma}_{\mu\nu}v^{\mu}v^{\nu}\geq0.
\end{equation}
It is interesting noting that when the torsion is zero, one recovers the energy conditions of GR, as one would expect, since the contortion tensor involved in Equation (32) also vanishes.\\
So far we have analysed the singular behaviour of photons and spinless particles, but it is more interesting to study the behaviour of spinning fields. This question has already been addressed in the literature, mainly following two approaches. The first one is to study the singular behaviour of particular cosmological models using the energy conditions and the modified Raychaudhuri Equation for non symmetric connection derived by Stewart and Hajicek~\cite{Steward} (for a review of this approach see~\cite{Hehl}). These studies try to obtain plausible cosmological models that are singularity free. Nevertheless, they come to the conclusion that it is necesary to have regions with high spin density to observe a behaviour different from GR, and to avoid the singularities. On the other hand, Esposito~\cite{Esposito} proved a singularity theorem for EC theory based on the incompleteness of autoparallel curves. He considers this criteria to be sufficient to establish the singular character of a spacetime. \\
In those two approaches, the argument is based on the modified Raychaudhuri equation for non-symmetric metric connections. The main difference comes, as one would expect, from a change in the antisymmetric part of the decomposition mentioned in Equation (7), since now $B_{\mu\nu}$ is defined throughout the total connection. Then, the equation can be expressed as follows:
\begin{eqnarray}
&&v^{\rho}\nabla_{\rho}\theta=\frac{d\theta}{d\tau}=-\frac{1}{3}\theta^{2}-\sigma^{\mu\rho}\sigma_{\mu\rho}
\nonumber
\\&+&\left(\omega^{\mu\rho}+S^{\mu\nu}\right)\left(\omega_{\mu\rho}+S_{\mu\nu}\right)-R_{\rho\varphi}v^{\rho}v^{\varphi},
\end{eqnarray}
where $S_{\mu\nu}$ is a tensor that is usually defined through the following relation with the modified torsion tensor~\cite{Esposito}
\begin{equation}
S_{\mu\nu}^{\,\,\,\,\,\rho}=S_{\mu\nu}v^{\rho}.
\end{equation}
The problem with this reasoning is that the spin particles do not follow in general autoparallels curves of the total connection, so there might be situations where there is incompleteness of this kind of curves but no singular spin trajectories. Nevertheless, one could have the curiosity of knowing which is the Raychadhuri equation for the spin test particles. We will know study how we can expressing, making an study valid for all the Poincar\'e gauge theories of gravity.\\
All the analysis of these trajectories up to this point, which are reviewed in~\cite{ProbeB}, have a thing in common: after some algebra, they can all be expressed in the form  
\begin{eqnarray}
&&a^{\mu}=v^{\rho}\mathring{\nabla}_{\rho}v_{\mu}
\nonumber
\\
&=&C\left(\frac{\hbar}{m}\right)f\left(\widehat{R}_{\,\,\,\lambda\rho\sigma}^{\mu}s^{\rho\sigma}v^{\lambda}+K_{\rho\sigma}^{\,\,\,\,\,\,\mu}v^{\rho}v^{\sigma}\right),
\end{eqnarray}
where $C$ is a constant, $m$ is the mass of the particle, and we have made explicit the Planck constant. The tensor $s^{\rho\sigma}$ is the internal spin tensor, related to the spin $s^{\mu}$ of the particle by
\begin{equation}
s^{\mu}=\frac{1}{2}\epsilon^{\mu\nu\rho\sigma}v_{\nu}s_{\rho\sigma},
\end{equation}
where $\epsilon^{\mu\nu\rho\sigma}$ is the totally antisymmetric Levi-Civita tensor, which is normalised with the square root of the metric, as it is usual in a Lorentzian manifold. The \emph{function} $f$ represents a linear combination of different contractions of the tensors involved in the expression, depending on the analysis chosen. The connection with respect it is calculated the Riemmann tensor in brackets is also dependent of the analysis, but it is always one constructed with the Levi Civita and linear combinations of torsion related quantities.  \\
When writing the Raychaudhuri equation we choose to make it with respect to the Levi-Civita connection, since it is analogous to the expression in terms of the total connection, and in this way we avoid introducing new terms to the decomposition in Equation (7). With that in mind we have that  
\begin{eqnarray}
&&v^{\rho}\mathring{\nabla}_{\rho}\theta=\frac{d\theta}{ds}=-\frac{1}{3}\theta^{2}-\sigma^{\mu\rho}\sigma_{\mu\rho}
\nonumber
\\
&&+\omega^{\mu\rho}\omega_{\mu\rho}-\mathring{R}_{\rho\varphi}v^{\rho}v^{\varphi}
\nonumber
\\
&&+C\left(\frac{\hbar}{m}\right)\mathring{\nabla}_{\mu}\left(\widehat{R}_{\,\,\,\lambda\rho\sigma}^{\mu}s^{\rho\sigma}v^{\lambda}+K_{\rho\sigma}^{\,\,\,\,\,\,\mu}v^{\rho}v^{\sigma}\right).
\end{eqnarray}
Using this equation we could predict the appearance of focal/conjugate points in a congruence of this timelike curves, just by imposing a generalised curvature condition
\begin{equation}
\mathring{R}_{\rho\varphi}v^{\rho}v^{\varphi}-C\left(\frac{\hbar}{m}\right)\mathring{\nabla}_{\mu}\left[f\left(\widehat{R}_{\,\,\,\lambda\rho\sigma}^{\mu}s^{\rho\sigma}v^{\lambda}+K_{\rho\sigma}^{\,\,\,\,\,\,\mu}v^{\rho}v^{\sigma}\right)\right]\geq0,
\end{equation}
that must hold for every timelike vector $v^{\mu}$. Nevertheless, there are some issues with this approach. First of all, this is only valid for congruences of curves that have the same spin orientation for all the test particles, hence limiting the analysis. On the other hand, we know for the singularity theorems in GR that the existence of focal/conjugate points is not a sufficient condition for the appearance of singularities. We also need global conditions that allow us to reach a contradiction with the completeness of the curves. Since we are considering non-geodesical behaviour, the theorems that allow us to make that contradiction are no longer valid, and we cannot predict the singularities. That is why in this article we propose another approach, based on the result of the appearance of black/white hole regions in an arbitrary Lorentzian manifold. \\

It is clear from the previous analysis that spinning particles do not follow extremal curves. However, independently of how torsion affects these particles, they will follow timelike curves, since they are massive and we assume that locally (in a normal neighbourhood of a point) nothing can be faster than light (null geodesics). Hence, it would be interesting to see under what circumstances we have non-geodesical timelike singularities. For that, we recover the definition of an n-dimensional black/white hole given in Section III. From this definition, we conclude that if this kind of structures exist in our spacetime, we would have timelike curves (including non-geodesics) that do not have endpoints in the conformal infinity, since for the case of black holes, the spacetime $M$ is not contained in $J^{-}\left(\mathcal{J}^{+}\right)$, while for white holes, $M$ is not contained in $J^{+}\left(\mathcal{J}^{-}\right)$. This is exactly the extended definition of singularity that we have given in Section V. Considering this, we establish the following theorem:
\begin{thm}
Let $\left(M,g\right)$ be a strongly asymptotically predictable spacetime of dimension $n$, and $\Sigma$ a closed future trapped submanifold of arbitrary co-dimension $m$ in $M$. If the curvature condition holds along every future directed null geodesic emanating orthogonally from $\Sigma$, then some timelike curves in $M$ would not have endpoints in the conformal infinity, hence $M$ is a singular spacetime (Definition V.1).
\end{thm}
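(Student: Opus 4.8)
The plan is to recognise that the hypotheses of this theorem are word-for-word those of the black-hole Proposition proved in Section~III. I would therefore begin by invoking that Proposition directly: since $(M,g)$ is strongly asymptotically predictable, $\Sigma$ is a closed future trapped submanifold of arbitrary co-dimension $m$, and the curvature condition holds along every future-directed null geodesic emanating orthogonally from $\Sigma$, the Proposition guarantees that $\Sigma$ does not intersect $J^{-}(\mathcal{J}^{+})$. Equivalently, $\Sigma$ lies entirely in the black hole region $B=M-J^{-}(\mathcal{J}^{+})$. This is the only step where genuine geometric content enters---through the focalisation of Proposition~II.3 together with Wald's topological argument---so essentially all of the analytic work has already been carried out upstream and the present statement is close to a corollary.

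The second step is to manufacture the singular curves out of the fact that $\Sigma\subset B$. I would pick any point $p\in\Sigma$ and any future-directed timelike curve $\lambda$ issuing from $p$; crucially this is allowed to be non-geodesic, which covers the spinning test particles that motivate the whole section. The claim is that $\lambda$ can never leave $B$. The key observation is that $J^{-}(\mathcal{J}^{+})$ is a past set: if some point $q$ on $\lambda$ belonged to $J^{-}(\mathcal{J}^{+})$ then, since $q\in J^{+}(p)$ and hence $p\in J^{-}(q)$, transitivity of the causal relation would force $p\in J^{-}(\mathcal{J}^{+})$, contradicting $p\in\Sigma\subset B$. Hence every future-directed causal curve starting at $p$ remains in $B$ and, in particular, cannot acquire a future endpoint on $\mathcal{J}^{+}$. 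Note that this confinement uses only the causal (not the geodesic) structure, which is precisely why it survives the passage to torsion theories where the physical trajectories are no longer geodesics.

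Finally, I would translate this confinement into Definition~V.1. A future-directed timelike $\lambda$ trapped inside $B$ is barred from reaching $\mathcal{J}^{+}$, so in the compactification $\tilde{M}$ its future endpoint or limit set lies outside the conformal infinity $\mathcal{J}$; this is exactly the criterion of Definition~V.1, so $(M,g)$ is singular. The companion white-hole statement follows verbatim by time reversal, with $J^{-}(\mathcal{J}^{+})$ replaced by $J^{+}(\mathcal{J}^{-})$, as anticipated in the footnote to the Proposition.

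The step I expect to be delicate is the last translation rather than any computation: one must argue carefully that ``$\lambda$ does not reach $\mathcal{J}^{+}$'' really realises the singularity criterion of Definition~V.1 for a timelike curve that may be incomplete and non-geodesic, distinguishing the case in which $\lambda$ terminates at an endpoint outside $\mathcal{J}$ from the case in which it is inextensible with no endpoint in $\mathcal{J}$ at all. Everything else is bookkeeping layered on top of the already-established black-hole Proposition.
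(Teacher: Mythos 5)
Your proposal is correct and follows essentially the same route as the paper, which does not even write out a formal proof: it simply observes that Proposition III.5 places $\Sigma$ inside the black hole region $B=M-J^{-}\left(\mathcal{J}^{+}\right)$, so that timelike curves (geodesic or not) emanating from $\Sigma$ cannot acquire endpoints on the conformal infinity, which is precisely Definition V.1. Your added details --- the past-set/transitivity argument for why such curves never leave $B$, and the caveat about distinguishing an endpoint outside $\mathcal{J}$ from a future-inextensible curve with no endpoint at all --- are refinements the paper leaves implicit rather than a different approach.
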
 

From a physical point of view, one might wonder if one of the incomplete timelike curves actually represents the trajectory of a spin particle. From Equation (37), which represents the non-geodesical behaviour, we see that the only possible way that all the trajectories have endpoints in the conformal infinity is that there are huge values of the curvature and torsion near the event horizon, which in a physically plausible scenario it is not possible. This is why we consider it a more physically relevant theorem for the singular behaviour of the spin particles, since it is strongly related to the actual trajectories.

\section{Singularities in dynamical torsion theories}

So far, the two torsion theories that we have analysed are part of a set of theories known as Poincar\'e Gauge Gravity (PG)~\cite{Hehl, Poincare}. The reason why there are many theories under this premise is because we can construct a large number of invariants from the curvature and torsion tensors, and therefore a general gravitational Lagrangian has the complicated form of a sum of all available invariants of proper dimension. The coefficients in the sum can be arranged to obtain different gravitational theories (for some criteria on the election and stability of a large class of these PG Lagrangians see~\cite{Poincare,Sezgin}).

In this section, we will study a PG theory of gravity, hence it has the same geometrical background explained in section IV, with the following vacuum Lagrangian~\cite{JJ}:
\begin{equation}
\begin{array}{c}
S=\frac{c^{4}}{16\pi G}\int d^{4}x\sqrt{-g}\left[2\Lambda-\mathring{R}-\right.\\
\,\\
-\frac{1}{2}\left(2c_{1}+c_{2}\right)R_{\lambda\rho\mu\nu}R^{\mu\nu\lambda\rho}+c_{1}R_{\lambda\rho\mu\nu}R^{\lambda\rho\mu\nu}+\\
\,\\
\left.+c_{2}R_{\lambda\rho\mu\nu}R^{\lambda\mu\rho\nu}+d_{1}R_{\mu\nu}\left(R^{\mu\nu}-R^{\nu\mu}\right)\right].
\end{array}
\end{equation}
One interesting feature about this theory is that if we set the torsion to be zero, we recover GR. Then, it is expected to involve slight modifications to the standard theory in terms of the torsion tensor alone.

The field equations can be derived from this action by performing variations with respect to the gauge potentials $A_{\mu}$, which are related to the translations and Lorentz rotations generators of the Poincar\'e group in the following way:
\begin{equation}
A_{\mu}=e_{\,\,\,\mu}^{a}P_{a}+\omega_{\,\,\,\,\,\,\mu}^{ab}J_{ab},
\end{equation}
where $e_{\,\,\,\mu}^{a}$ is the vierbein field and $\omega_{\,\,\,\,\,\,\mu}^{ab}$ is the spin connection~\cite{JJ}.
The generators $P_{a}$ and $J_{ab}$ follow the usual commutation relations:
\begin{equation}
\left[P_{a},\,P_{b}\right]=0,
\end{equation}
\begin{equation}
\left[P_{a},\,J_{bc}\right]=i\eta_{a\left[b\right.}P_{\left.c\right]},
\end{equation}
\begin{equation}
\left[J_{ab},\,J_{cd}\right]=\frac{i}{2}\left(\eta_{ad}J_{bc}+\eta_{cb}J_{ad}-\eta_{db}J_{ac}-\eta_{ac}J_{bd}\right).
\end{equation} 
With that procedure we obtain the field equations:
\begin{equation}
\label{eq6}
\begin{array}{c}
\mathring{G}_{\mu}^{\,\,\nu}=-\frac{1}{2}\Lambda\delta_{\mu}^{\,\,\nu}+2c_{1}T1_{\mu}^{\,\,\nu}+c_{2}T2_{\mu}^{\,\,\nu}-\\
\,\\
-\left(2c_{1}+c_{2}\right)T3_{\mu}^{\,\,\nu}+d_{1}\left(H1_{\mu}^{\,\,\nu}-H2_{\mu}^{\,\,\nu}\right)
\end{array}
\end{equation}
and
\begin{equation}
\begin{array}{c}
2c_{1}C1_{\left[\mu\lambda\right]}^{\,\,\,\,\,\,\,\nu}-c_{2}C2_{\left[\mu\lambda\right]}^{\,\,\,\,\,\,\,\nu}+\left(2c_{1}+c_{2}\right)C3_{\left[\mu\lambda\right]}^{\,\,\,\,\,\,\,\nu}-\\
\,\\
-d_{1}\left(Y1_{\left[\mu\lambda\right]}^{\,\,\,\,\,\,\,\nu}-Y2_{\left[\mu\lambda\right]}^{\,\,\,\,\,\,\,\nu}\right)=0,
\end{array}
\end{equation}
where the functions $T,H,C,Y$ depend on the Riemann and torsion tensor and their contractions:
\begin{equation}
\begin{array}{c}
\,\\
T1_{\mu}^{\,\,\,\nu}\equiv R_{\lambda\rho\mu\sigma}R^{\lambda\rho\nu\sigma}-\frac{1}{4}\delta_{\mu}^{\,\,\,\nu}R_{\lambda\rho\alpha\sigma}R^{\lambda\rho\alpha\sigma},\\
\,\\
T2_{\mu}^{\,\,\,\nu}\equiv R_{\lambda\rho\mu\sigma}R^{\lambda\nu\rho\sigma}+R_{\lambda\rho\sigma\mu}R^{\lambda\sigma\rho\nu}-\\
-\frac{1}{2}\delta_{\mu}^{\,\,\,\nu}R_{\lambda\rho\alpha\sigma}R^{\lambda\alpha\rho\sigma},\\
\,\\
T3_{\mu}^{\,\,\,\nu}\equiv R_{\lambda\rho\mu\sigma}R^{\nu\sigma\lambda\rho}-\frac{1}{4}\delta_{\mu}^{\,\,\,\nu}R_{\lambda\rho\alpha\sigma}R^{\alpha\sigma\lambda\rho},\\
\,\\
H1_{\mu}^{\,\,\,\nu}\equiv R_{\,\,\,\lambda\mu\rho}^{\nu}R^{\lambda\rho}+R_{\lambda\mu}R^{\lambda\nu}-\frac{1}{2}\delta_{\mu}^{\,\,\,\nu}R_{\lambda\rho}R^{\lambda\rho},\\
\,\\
H2_{\mu}^{\,\,\,\nu}\equiv R_{\,\,\,\lambda\mu\rho}^{\nu}R^{\rho\lambda}+R_{\lambda\mu}R^{\nu\lambda}-\frac{1}{2}\delta_{\mu}^{\,\,\,\nu}R_{\lambda\rho}R^{\rho\lambda},\\
\,\\
C1_{\mu}^{\,\,\,\,\lambda\nu}\equiv\mathring{\nabla}_{\rho}R_{\mu}^{\,\,\,\,\lambda\rho\nu}+K_{\,\,\,\,\sigma\rho}^{\lambda}R_{\mu}^{\,\,\,\,\sigma\rho\nu}-K_{\,\,\,\,\mu\rho}^{\sigma}R_{\sigma}^{\,\,\,\,\lambda\rho\nu},\\
\,\\
C2_{\mu}^{\,\,\,\,\lambda\nu}\equiv\mathring{\nabla}_{\rho}\left(R_{\mu}^{\,\,\,\,\nu\lambda\rho}-R_{\mu}^{\,\,\,\,\rho\lambda\nu}\right)+\\
+K_{\,\,\,\,\sigma\rho}^{\lambda}\left(R_{\mu}^{\,\,\,\,\nu\sigma\rho}-R_{\mu}^{\,\,\,\,\rho\sigma\nu}\right)-\\
-K_{\,\,\,\,\mu\rho}^{\sigma}\left(R_{\sigma}^{\,\,\,\,\nu\lambda\rho}-R_{\sigma}^{\,\,\,\,\rho\lambda\nu}\right),\\
\,\\
C3_{\mu}^{\,\,\,\,\lambda\nu}\equiv\mathring{\nabla}_{\rho}R_{\,\,\,\,\,\,\,\,\,\mu}^{\rho\nu\lambda}+K_{\,\,\,\,\sigma\rho}^{\lambda}R_{\,\,\,\,\,\,\,\,\,\mu}^{\rho\nu\sigma}-K_{\,\,\,\,\mu\rho}^{\sigma}R_{\,\,\,\,\,\,\,\,\,\sigma}^{\rho\nu\lambda},\\
\,\\
Y1_{\mu}^{\,\,\,\,\lambda\nu}\equiv\delta_{\mu}^{\,\,\,\nu}\mathring{\nabla}_{\rho}R^{\lambda\rho}-\mathring{\nabla}_{\mu}R^{\lambda\nu}+\delta_{\mu}^{\,\,\,\nu}K_{\,\,\,\sigma\rho}^{\lambda}R^{\sigma\rho}+\\
+K_{\,\,\,\mu\rho}^{\rho}R^{\lambda\nu}-K_{\,\,\,\mu\rho}^{\nu}R^{\lambda\rho}-K_{\,\,\,\rho\mu}^{\lambda}R^{\rho\nu},\\
\,\\
Y2_{\mu}^{\,\,\,\,\lambda\nu}\equiv\delta_{\mu}^{\,\,\,\nu}\mathring{\nabla}_{\rho}R^{\rho\lambda}-\mathring{\nabla}_{\mu}R^{\nu\lambda}+\delta_{\mu}^{\,\,\,\nu}K_{\,\,\,\sigma\rho}^{\lambda}R^{\rho\sigma}+\\
+K_{\,\,\,\mu\rho}^{\rho}R^{\nu\lambda}-K_{\,\,\,\mu\rho}^{\nu}R^{\rho\lambda}-K_{\,\,\,\rho\mu}^{\lambda}R^{\nu\rho}.\\
\,
\end{array}
\end{equation}
As we have explained, the only difference between this theory and EC are the fields equations. This means that the curvature conditions remain the same, and so does the Proposition about the appearance of black holes. Nevertheless, the energy conditions change.

In Equation (\ref{eq6}) we have already isolated the Levi-Civita Einstein tensor $\mathring{G}$, therefore we can consider the right side of the equation as an effective energy-momentum tensor
\begin{equation}
\mathring{G}_{\mu\nu}=\mathcal{T}_{\mu\nu}.
\end{equation} 
This leads us to the energy conditions for this theory:
\begin{itemize}
\item Strong energy condition:
\begin{equation}
\mathcal{T}_{\mu\nu}v^{\mu}v^{\nu}\geq\frac{1}{2}\mathcal{T}
\end{equation}
for every timelike vector $v^{\mu}$. 
\item Weak energy condition:
\begin{equation}
\mathcal{T}_{\mu\nu}v^{\mu}v^{\nu}\geq0
\end{equation}
for every null vector $v^{\mu}$. 
\end{itemize}

These conditions depend on some intricate functions of the curvature tensor, and it makes us think that probably it is better in this case (and also in EC) to evaluate the conditions directly calculating the torsion-free Riemann and Ricci tensor of the considered metric. However, expressing them in this form makes us realise of some curious facts about the theory.

It is interesting to note that in GR a vacuum solution always meets the energy conditions. In this theory though, the situation is different. For example, we can arrange the coefficients in a way that the spacetime contains a closed trapped submanifold of codimension 2 (closed trapped surface) and yet be a singularity free spacetime. This is impossible for a vacuum solution in GR (if the generic condition holds), since in this kind of solutions the Ricci tensor is identically zero. 

Let us now explore a specific case. First, we set all the coefficients to zero except for $d_{1}$. Observing the field equations, we see that the second one can be solved by setting the Ricci tensor to be zero. In that case, the first equation is just:
\begin{equation}
\mathring{G}_{\mu\nu}=0,
\end{equation}
which is the vacumm field equation in GR. This means that flat Ricci solutions ($R_{\mu\nu}=0$) recover the same metrics that GR. However, this is not true for an arbitrary connection, since the equations that relate the Ricci tensor with the Levi-Civita one must hold. Therefore, this statement would be true for connections that follow the equation
\begin{equation}
\mathring{\nabla}_{\sigma}K_{\,\,\mu\rho}^{\sigma}-\mathring{\nabla}_{\rho}K_{\,\,\mu\sigma}^{\sigma}+K_{\,\,\alpha\sigma}^{\sigma}K_{\,\,\mu\rho}^{\alpha}-K_{\,\,\alpha\rho}^{\sigma}K_{\,\,\mu\sigma}^{\alpha}=0.
\end{equation} 
At first sight, one might think that the only solution to this equation is a zero contortion tensor, hence obtaining a torsion-free spacetime. However, let us for example take $K_{\,\,10}^{0}=-K_{\,\,00}^{1}=1$ and the rest to be zero. Then it is easy to see that the previous equation holds. Therefore, with a suitable connection we can recover all the metrics of the vacuum solutions of GR in a torsion theory.

The interesting fact is that, although the metrics are the same as in GR, and hence very well known spacetimes that describe satisfactorily many physical situations, the underlying theory is different, and so the matter and energy content and the motion of particles will differ from GR. Nevertheless, as we have seen, we can still apply the GR singularity theorems to scalar fields and photons, and the black hole formalism for the rest of particles. Since the metric is the same, the conditions of the appearance of timelike and null singularities and black/white hole regions would be the same as in GR. So in this case, we can establish that the presence of torsion does not change the singular behaviour of the spacetime.

Although this was a rather special case, it is possible to recover some famous metrics with a more general election of the coefficients. This is the case of a recent solution by two of the authors~\cite{JJ}, where a Reissner-Norstr\"{o}m solution is found setting the coefficients to be $c_{1}=-d_{1}/4$ and $c_{2}=-d_{1}/2$. Since this is a black hole solution, we can study the singular behaviour of spin particles within this framework. 

\section{Conclusions}
In this work we have studied how to extend the tools used in GR to deduce the appearance of singularities to theories of gravitation that include torsion. In order to study that, we have first reviewed two modern singularity theorems by Senovilla and Galloway. For our purposes, the interesting part about these theorems is the curvature condition that they obtain to predict the existence of focal points of a spacelike submanifold. We have used that result to prove the Proposition III.5, that gives us the necessary conditions for the appearance of black/white hole regions of arbitrary dimension in a spacetime. With that established, we have analysed three particular theories. In the case of TEGR we have obtained equivalent results to GR, although the expression for the curvature tensors change, as one might expect. In EC theory we have seen that for minimally coupled scalar fields and photons we can use the results proved in GR. For the rest of particles, we consider the existence of black/white hole as an indicator of the singular character of their trajectories. In this case we also obtain their energy conditions. For the dynamical torsion example we have made a similar analysis of that of EC theory. We have obtained the same geometrical results, although the energy conditions change, leading to some interesting behaviours. For instance, we have shown that in a vacuum solution we can have a violation of the energy conditions, something that cannot happen in GR or EC theory. Furthermore, we analyse a particular Lagrangian and obtain that we can reproduce all the metric structure of the vacuum solutions of GR in theories with torsion. \\
The formalism that we have developed can be used in other modified gravity theories, as long as the inner structure is a Lorentzian manifold, using the following considerations. As we have already discussed, a minimally coupled scalar field in these theories will follow timelike geodesics, so we can use the singularity theorems of GR that are based on incomplete timelike geodesics, such as the Hawking theorem. On the other hand, we have been using the fact that in the theories that we have considered, photons follow null geodesics. This is not necessarily true for all the torsion theories, since in some of them we can couple the Maxwell equations to torsion non-minimally and still preserve the gauge invariance~\cite{Hojman}. Nevertheless, this would mean that we can still use the black hole formalism, because they would not follow spacelike curves. Here we can see how powerful this result is, because it allows us to predict the singular behaviour of any non-spacelike curve, which includes coupled photons, spinning particles or non-minimal coupled fields.\\
Moreover, we have used the cosmic censorship as a plausible condition in torsion theories. In any case, it would be very interesting to study the possible creation of naked singularities in these theories under physical realistic conditions~\cite{delacruz}, and to test with concrete examples if spinning particles would reach the black/white hole regions. In order to conclude if the spin can advert singularities in torsion theories, it is useful to work in the semiclassical limit of the Dirac wave function via the WKB approximation, as treated in~\cite{WKB}. Using the equation of motion given by Audretsch we can simulate numerically the movement of spin particles around the event horizon. Work is in progress along this line.\\

\acknowledgments
This work has been supported in part by the MINECO (Spain) projects FIS2014-52837-P, FPA2014-53375-C2-1-P, and Consolider-Ingenio MULTIDARK CSD2009-00064.


\begin{thebibliography}{99}

\bibitem{Penrose}
R.~Penrose, Gravitational Collapse and Space-Time Singularities,
Phys. Rev. Lett. 14 (1965).

\bibitem{Raychaudhuri}
A.~Raychaudhuri,  Relativistic cosmology. I,
Physical Review, 98(4), 1123 (1955).

\bibitem{SenPen}
J.~M.~M.~Senovilla and D.~Garfinkle, The 1965 Penrose singularity theorem. 
Classical and Quantum Gravity, 32(12), (2015).

\bibitem{Hawking}
S.~W.~Hawking, The Occurrence of Singularities in Cosmology. II, 
Proc. R. Soc. Lond. A, 295, 490-493 (1966).

\bibitem{Riess}
A.~G.~Riess~et~al., Observational evidence from supernovae for an accelerating universe and a cosmological constant,
Astron. J., 116, 1009-1038 (1998).

\bibitem{Senovilla}
J.~M.~M.~Senovilla, Singularity Theorems and Their Consequences,
General Relativity and Gravitation, 30 (1998) 701.

\bibitem{ACdD}
F.~D.~Albareti, J.~A.~R.~Cembranos, A.~de la Cruz-Dombriz and A.~Dobado, On the non-attractive character of gravity in f(R) theories, Journal of Cosmology and Astroparticle Physics, 2013(07), 009.

\bibitem{Wald}
R.~M.~Wald, General Relativity.
Chicago: The University of Chicago Press (1984).

\bibitem{JimKo}
J.~B.~Jim\'enez and T.~S.~Koivisto, Spacetimes with vector distortion: Inflation from generalised Weyl geometry.
2016. $<$hal-01282524$>$.

\bibitem{TG}
 R.~Aldrovandi and J.~G.~Pereira, An introduction to teleparallel gravity, 
Instituto de Fisica Teorica, UNSEP, Sao Paulo (2010).

\bibitem{Nonmet}
F. W. Hehl, J. D. McCrea, E. W. Mielke, Y. Ne'eman, Metric-affine gauge theory of gravity, 
Physics Reports 258, 1-171 (1995).

\bibitem{SG}
 G.~J.~Galloway and J.~M.~M.~Senovilla, Singularity theorems based on trapped submanifolds of arbitrary co-dimension, 
Classical and Quantum Gravity, 27(15) (2010).

\bibitem{HE}
S.~W.~Hawking and G.~F.~R.~Ellis,  The large scale structure of space-time,
Cambridge: Cambridge University Press (1973).

\bibitem{Oneill}
B.~O'Neill, Semi-Riemannian geometry with applications to Relativity,
Academic Press Limited, London (1983).

\bibitem{Kriele}
M.~Kriele, Spacetime: Foundations of General Relativity and Differential Geometry, 
Springer-Verlag (1999).

\bibitem{JS}
M.~A.~Javaloyes and M.~S\'anchez, An introduction to Lorentzian Geometry and its applications,
XVI Escola de Geometria Diferencial, Sao Paulo (2010).

\bibitem{Sanchez}
M.~S\'anchez, Recent progress on the notion of global hyperbolicity,
Advances in Lorentzian Geometry Vol. 2 (2011).

\bibitem{Infinity}
J.~Frauendiener, Conformal Infinity,
Living Rev. Relativity, 7, (2004), http://www.livingreviews.org/lrr-2004-1.

\bibitem{Munkres}
J.~Munkres, Topology.
Prentice Hall (2000).

\bibitem{Gauge}
M. Blagojevi\'c and F. W. Hehl, Gauge Theories of Gravitation,
Imperial College Press, London (2013).

\bibitem{Shapiro}
I. L. Shapiro, Physical aspects of the spacetime torsion,
Physics Reports, 357(2), 113-213 (2002).

\bibitem{TG2}
V.~C.~De Andrade, L.~C.~T.~Guillen and J.~G.~Pereira, Teleparallel gravity: an overview, 
arXiv preprint gr-qc/0011087 (2000).

\bibitem{Alvaro}
A. de la Cruz-Dombriz, P.K.S. Dunsby and D. Saez-Gomez, Junction conditions in extended Teleparallel gravities, 
JCAP 12 (2014) 048 [arXiv:1406.2334].

\bibitem{Sabbata}
V. de Sabbata and M. Gasperini, Introduction to gravitation, 
World Scientific, Singapore (1985).

\bibitem{Hehl}
F. W. Hehl, P. von der Heyde, G. D. Kerlick and J. M. Nester, General Relativity with spin and torsion: Foundations and prospects, 
Rev. Mod. Phys. 48, 293 (1976).

\bibitem{Steward}
J. Stewart and P. Hajicek, Can spin avert singularities?, 
Nature, 244(136), 96-96 (1973).

\bibitem{Esposito}
G. Esposito, Mathematical Structures of Space-Time,
Fortschritte der Physik/Progress of Physics, 40(1), 1-30 (1992).

\bibitem{ProbeB}
Hehl, F. W., Obukhov, Y. N., and  Puetzfeld, D., On Poincar\'e gauge theory of gravity, its equations of motion, and Gravity Probe B. Physics Letters A, 377(31), 1775-1781 (2013).

\bibitem{Poincare}
Yu.~N.~Obukhov, V.~N.~Ponomariev and V.~V.~Zhytnikov, Quadratic Poincar\'e Gauge Theory of Gravity: A Comparison with the General Relativity Theory, 
Gen. Rel. Grav. 21, 1107 (1989).

\bibitem{Sezgin}
D.E. Nevill, Gravity Lagrangian with ghost-free curvature-squared term,
Phys. Rev. D 18 (1978) 3535; E. Sezgin, P. van Nieuwenhuizen, New
ghost-free gravity Lagrangian with propagating torsion, Phys. Rev. D 21
(1981) 3269; E. Sezgin, Class of ghost-free gravity Lagrangians with massive
or massless propagating torsion, Phys. Rev. D 24 (1981) 1677.

\bibitem{JJ}
J.~A.~R.~Cembranos and J.~Gigante~Valcarcel, New torsion black hole solutions in Poincar\'e gauge theory, 
JCAP 01 (2017) 014, arXiv:1608.00062 (2016).

\bibitem{Hojman}
S. Hojman, M. Rosenbaum, M. P. Ryan, and L. C. Shepley, Gauge invariance, minimal coupling, and torsion.
Phys. Rev. D 17, 3141 (1978).

\bibitem{WKB}
J.~Audretsch, Dirac electron in space-times with torsion: Spinor propagation, spin precession, and nongeodesic orbits, 
Physical Review D, 24(6), 1470 (1981).

\bibitem{delacruz}
A. de la Cruz-Dombriz, P.K.S. Dunsby, O. Luongo and L. Reverberi, Model-independent limits and constraints on extended theories of gravity from cosmic reconstruction techniques, JCAP 12 (2016) 042 [arXiv:1608.03746].



\end{thebibliography}
\end{document}